\documentclass[conference]{IEEEtran}
\IEEEoverridecommandlockouts

\usepackage{cite}
\usepackage{amsmath,amssymb,amsfonts}
\usepackage{algorithmic}
\usepackage{graphicx}
\usepackage{textcomp}
\usepackage{xcolor}
\def\BibTeX{{\rm B\kern-.05em{\sc i\kern-.025em b}\kern-.08em
    T\kern-.1667em\lower.7ex\hbox{E}\kern-.125emX}}
\usepackage{amsthm}

\theoremstyle{definition}
\newtheorem{thm}{Theorem}

\theoremstyle{definition}
\newtheorem{defn}{Definition}

\theoremstyle{lemma}
\newtheorem{lemma}{Lemma}

\begin{document}

\title{Hemispherical Concentration Subset Recovery in Many-Access Gaussian Multiple-Access Channels}

\author{\IEEEauthorblockN{Nazanin Mirhosseini}
\IEEEauthorblockA{\textit{Dept. of Electrical and Computer Engineering} \\
\textit{Colorado State University}\\ Fort Collins, CO, USA \\
nazmir@colostate.edu}}

\maketitle

\begin{abstract}
	We consider subset recovery in the many-access Gaussian multiple-access channel with a shared spherical codebook, where codewords are drawn independently and uniformly from the hypersphere of radius \( \sqrt{nP} \), the number of active users scales linearly with the blocklength $n$ as \( K_a(n)=\beta n \) for a constant \( \beta > 0 \), and the codebook size is \( M_n=n^d \) with \( d>2 \). We identify a geometric property showing that, for \( 0<\beta<2 \), any transmitted \( K_a(n) \)-subset lies in a single hemisphere with high probability for sufficiently large $n$. We further show that reliable decoding is possible only for \( \beta < 1/4 \). The overlap between the reliable decoding range of \( \beta \) and the hemispherical concentration range motivates our approach of two-stage decoding procedure. In the pre-filtering stage, the decoder restricts attention to a sequence of spherical caps \( \{ \hat{\mathcal{H}}_n \} \) that converges in Hausdorff distance to the hemisphere $\hat{\mathcal{H}}$, whose axis is the normalized observation \( \hat{\mathbf{u}}=\mathbf{Y}/\|\mathbf{Y}\| \). In the second stage, maximum-likelihood decoding is performed over the reduced candidate set. We show that the per-user error probability of the pre-filtering stage vanishes as \( n\to\infty \). Moreover, the per-user error probability of the maximum-likelihood stage over the reduced search space decays exponentially with asymptotic exponent \( P/4 \).
\end{abstract}

\begin{IEEEkeywords}
	many-access channel, random access, spherical codebook, unsourced random access.
\end{IEEEkeywords}

\section{Introduction}

The growth of the Internet of Things (IoT) and massive machine-type communication (mMTC) calls for efficient resource sharing among intermittently active devices \cite{Bockelmann2016, Dawy2017}. Unsourced random access (URA) is a useful model in this context because it removes the need for identity resolution, thereby eliminating most of the coordination overhead associated with classical multiple access protocols \cite{Polyanskiy2017, Liva2024, Wu2020}. In URA, the receiver aims to recover the unordered set of transmitted messages from a shared codebook, where user identities play no role in the decoding objective \cite{Polyanskiy2017}. This formulation is particularly appealing for large-scale access, but standard URA analyses typically rely on a sparse activity regime, where the number of active users is small relative to the ambient dimension. By contrast, practical mMTC deployments are often ultra-dense. In such regimes, the active population may grow proportionally with the blocklength, rendering the standard sparsity assumption less applicable. The resulting challenge is no longer merely to identify a few active messages among many possibilities, but to distinguish a large number of simultaneously transmitted messages in a dense, many-user environment \cite{Chen2017, Kowshik2020}.
Motivated by this perspective, we study a URA-inspired subset recovery problem in a dense many-access regime. The receiver is tasked with recovering the transmitted message subset from a shared codebook, while user identities remain irrelevant. Our model retains the structural advantages of URA while allowing the number of active users to grow linearly with the blocklength. Since the codebook size grows faster (polynomially), the active-user fraction still vanishes asymptotically, but the resulting scaling is markedly different from the sparse random-access regime. Furthermore, we employ spherical codebooks, with codewords drawn uniformly from the surface of the sphere $\mathbb{S}^{n-1}(\sqrt{nP})$. Unlike i.i.d.\ Gaussian codebooks, which satisfy the power constraint only with high probability, the spherical ensemble enforces constant-energy transmission deterministically and assigns equal power to all messages \cite{Gao2025}, \cite{Shannon1959}. This geometry is natural for power-constrained multiple-access models and can sharpen the geometric separation among candidate codewords. In our setting, the spherical geometry provides a useful way to reduce the effective search space, and the per-user error probability metric remains compatible with the non-sparse subset recovery formulation under many-access scaling.

\emph{Notations}: We denote the unit hypersphere in $\mathbb{R}^{n}$ by $\mathbb{S}^{n-1}$. Convergence in probability and in distribution are denoted by $\xrightarrow{P}$ and $\xrightarrow{d}$, respectively. A hemisphere is specified by a unit vector (direction) pointing form the center toward its pole. The first two standard basis vectors are $\mbox{\boldmath $e$}_{1}=(1,0,...,0)$ and $\mbox{\boldmath $e$}_{2}=(0,1,0,...,0)$. We write $\mbox{\boldmath $s$}_{i}=\mbox{\boldmath $x$}_{i}/\sqrt{nP}$ for the normalized codeword. The binomial and Bernoulli distributions are denoted by $\mathrm{Bin}(.,.)$ and $\mathrm{Bern}(.)$, respectively. The cardinality of a set $\mathcal{A}$ is denoted by $|\mathcal{A}|$.

\section{System Model}

Each active user selects a message uniformly at random from $\mathcal{M}_n = [M_n]$, where $M_n= n^d$ for $d>2$, and the corresponding message is mapped to a codeword drawn uniformly and independently from the surface of sphere $\mathbb{S}^{n-1}(\sqrt{nP})$. The encoder for a fixed $n$ is hence $f: \mathcal{M}_n \rightarrow \mathcal{X}^{n}$, where $\mathcal{X}$ is the input alphabet. The codebook $\mathcal{C}_n = \left\{ \mbox{\boldmath $x$}_m : m \in [M_n] \right\}$ is common among users. More precisely,
\begin{equation}
	\mbox{\boldmath $x$}_m = \sqrt{nP}\,\frac{\mbox{\boldmath $X$}_m}{\|\mbox{\boldmath $X$}_m\|}, \qquad m \in [M_n],
\end{equation}
with $\mbox{\boldmath $X$}_m \sim \mathcal{N}(\mathbf{0},\mathbf{I}_n)$ independently across $m$.

We assume that the number of active users satisfies $K_a(n) = \beta n$, for some fixed $\beta > 0$. For each blocklength $n$, let $\mathcal{S} \subset [M_n]$ denote the transmitted subset of indices with $|\mathcal{S}| = K_a(n)$. The Gaussian multiple access channel output is
\begin{equation}
\mbox{\boldmath $Y$} = \sum_{m \in \mathcal{S}} \mbox{\boldmath $x$}_m + \mbox{\boldmath $Z$},
\label{Y}
\end{equation}
where noise $\mbox{\boldmath $Z$} \sim \mathcal{N}(\mathbf{0},\mathbf{I}_n)$ is independent of the codebook.

The decoder observes $\mbox{\boldmath $Y$}$ and, given the knowledge of $\beta$, aims to recover the transmitted subset. The decoder is defined for a fixed $n$ as
\begin{eqnarray}
	g : \mathcal{Y}^n \to { [M_n] \choose K_a(n)},
\end{eqnarray}
where $\mathcal{Y}$ is the output alphabet.
We describe the two-stage decoding procedure in Section \ref{GeoDec}.

\section{Hemispherical Subsets}

\begin{defn}(\textit{Hemispherical $K_a(n)$-subset}):
	For a fixed $n$, a $K_a(n)$-subset chosen from a codebook of size $M_n$ is hemispherical, if the convex hull of the codewords in the chosen subset does not contain the origin, i.e., all points in a $K_a(n)$-hemispherical subset lie only on one half of the sphere.
\end{defn}
	
\begin{thm}(\textit{Wendel's Theorem}) \cite{Wendel1962} \label{Wendel}
	The probability that $N$ points distributed uniformly at random on an $(n-1)$-sphere, all lie on some hemisphere is
	\begin{eqnarray}
		p_{n,N} = \frac{1}{2^{N-1}}\sum_{i=0}^{n-1}{N-1  \choose i}. \nonumber
	\end{eqnarray}
\end{thm}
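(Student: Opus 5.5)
The plan is Wendel's classical argument: a sign-flipping symmetry combined with a count of the regions cut out by central hyperplanes. Let $\mathbf{s}_1,\dots,\mathbf{s}_N$ be i.i.d.\ uniform on $\mathbb{S}^{n-1}$. With probability one they are in general position, meaning every subset of at most $n$ of them is linearly independent. We also use that the uniform law is invariant under $\mathbf{s}\mapsto-\mathbf{s}$. The points lie in some (open) hemisphere exactly when there is a $\mathbf{w}$ with $\langle \mathbf{w},\mathbf{s}_i\rangle>0$ for all $i$; under general position this is equivalent, almost surely, to the closed-hemisphere and convex-hull formulations in the definition of a hemispherical subset. Now take independent signs $\epsilon_i\in\{\pm1\}$. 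The vectors $(\epsilon_1\mathbf{s}_1,\dots,\epsilon_N\mathbf{s}_N)$ have the same joint law as $(\mathbf{s}_1,\dots,\mathbf{s}_N)$. Averaging over all $2^N$ sign patterns therefore gives
\[
p_{n,N}=2^{-N}\,\mathbb{E}\bigl[\#\{\epsilon\in\{\pm1\}^N:\exists\,\mathbf{w},\ \epsilon_i\langle\mathbf{w},\mathbf{s}_i\rangle>0\ \forall i\}\bigr].
\]
This reduces the problem to a deterministic count for points in general position.

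The next step is to identify this count with the number of regions of a central hyperplane arrangement. Each $\mathbf{s}_i$ defines a hyperplane $\mathbf{s}_i^\perp$ through the origin. A sign pattern $\epsilon$ is realizable exactly when the open cone $\{\mathbf{w}:\epsilon_i\langle\mathbf{w},\mathbf{s}_i\rangle>0\}$ is nonempty, i.e., it is one of the regions cut out by the $N$ hyperplanes. Call this number of regions $C(N,n)$. I would show that in general position $C(N,n)=2\sum_{i=0}^{n-1}\binom{N-1}{i}$. Since this is deterministic, the expectation is unnecessary and the formula $p_{n,N}=2^{-N}\cdot 2\sum_{i=0}^{n-1}\binom{N-1}{i}$ follows immediately.

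The main step is the region count, which I would prove by induction on $N$ via the recursion $C(N,n)=C(N-1,n)+C(N-1,n-1)$. Add the $N$-th hyperplane $H=\mathbf{s}_N^\perp$. The number of new regions equals the number of old regions that $H$ splits in two. This in turn equals the number of regions of the induced arrangement of $N-1$ hyperplanes inside $H\cong\mathbb{R}^{n-1}$, namely $\{\mathbf{s}_i^\perp\cap H\}$. General position must be checked to survive the restriction: the projections of $\mathbf{s}_1,\dots,\mathbf{s}_{N-1}$ onto $H$ remain in general position in $n-1$ dimensions almost surely, because any $n-1$ of them together with $\mathbf{s}_N$ are independent.

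The base cases are $C(1,n)=2$ for $n\ge1$ and $C(N,1)=2$. The identity $\binom{N-1}{i}=\binom{N-2}{i}+\binom{N-2}{i-1}$ then verifies that $2\sum_{i=0}^{n-1}\binom{N-1}{i}$ satisfies the same recursion and boundary values. The delicate point is handling the degenerate events (dependence, points on a hyperplane) cleanly; all of these have probability zero under the continuous uniform law, so they do not affect the probability.
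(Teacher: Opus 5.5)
The paper gives no proof of its own here; it simply cites Wendel (1962). Your argument is correct and is essentially Wendel's original one: invariance under $\mathbf{s}_i\mapsto\epsilon_i\mathbf{s}_i$ reduces $p_{n,N}$ to $2^{-N}$ times the number of regions cut out by $N$ central hyperplanes in general position, and Schl\"afli's count $C(N,n)=2\sum_{i=0}^{n-1}\binom{N-1}{i}$, obtained via the recursion $C(N,n)=C(N-1,n)+C(N-1,n-1)$, finishes it. One small remark: the paper's convex-hull definition already coincides exactly with your open-hemisphere event, by strict separation of the compact hull from the origin, so general position is needed only to identify it with the closed-hemisphere version.
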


We now adapt Wendel's theorem to our model. 
\begin{thm}\label{K_aHemispherical}
	Let $p_{n,K_a(n)}$ denote the probability that a $K_a(n)$-subset of codewords is hemispherical. Then $p_{n,K_a(n)}\to 1$ exponentially as $n\rightarrow \infty$
if and only if $1<\beta<2$. If $0<\beta \leq 1$, the probability $p_{n,K_a(n)} = 1$ for all $n$.
\end{thm}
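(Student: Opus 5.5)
Write $N=K_a(n)=\beta n$ (an integer) and start from Wendel's theorem (Theorem~\ref{Wendel}). The normalized codewords $\mbox{\boldmath $s$}_i$ are i.i.d.\ uniform on $\mathbb{S}^{n-1}$, so the theorem applies directly:
\begin{equation}
p_{n,N}=\Pr\bigl(\mathrm{Bin}(N-1,\tfrac12)\le n-1\bigr). \nonumber
\end{equation}
The problem therefore reduces to the lower tail of a symmetric binomial. I will split into cases according to how the threshold $n-1$ compares with the mean $(N-1)/2$.

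\emph{Case $0<\beta\le 1$.} Here $N-1\le n-1$, so the sum $\sum_{i=0}^{n-1}{N-1\choose i}$ contains every binomial coefficient and equals $2^{N-1}$. Hence $p_{n,N}=1$ for all $n$. Geometrically, $N\le n$ generic points span at most $N$ dimensions, so some hyperplane through the origin leaves them all on one side.

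\emph{Case $1<\beta<2$, sufficiency.} I compute $1-p_{n,N}=\Pr(\mathrm{Bin}(N-1,\frac12)\ge n)$. Since $n/(N-1)\to 1/\beta>1/2$, the threshold lies a constant fraction above the mean. Chernoff--Hoeffding then gives
\begin{equation}
1-p_{n,N}\le \exp\bigl(-(N-1)\,D(\tfrac{n}{N-1}\,\|\,\tfrac12)\bigr), \nonumber
\end{equation}
where $D$ is the binary KL divergence. As $n\to\infty$ this behaves like $\exp(-n\beta D(1/\beta\|1/2))$. The exponent is strictly positive because $1/\beta\ne 1/2$, so the convergence is exponential.

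\emph{Necessity of $\beta<2$.} If $\beta>2$, the threshold $n-1$ sits a constant fraction below the mean, and the same Chernoff bound applied to the lower tail gives $p_{n,N}\to 0$ exponentially. If $\beta=2$, then $n-1=(N-1)/2-\tfrac12$, and the CLT (or symmetry of $\mathrm{Bin}(2n-1,\frac12)$) gives $p_{n,N}=1/2$ exactly. So $\beta\ge 2$ does not give $p\to 1$.

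\emph{Interpretation of ``iff''.} For $\beta\le1$ we have $p\equiv1$, which trivially tends to $1$. The ``only if'' should therefore be read as saying that nontrivial exponential convergence occurs exactly on $1<\beta<2$, with $\beta\le1$ covered separately by the second sentence of the statement. I will state this reading explicitly, so the iff is proved as $\{p\to1\}\Leftrightarrow\beta<2$, refined into the two regimes.

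The main obstacle is bookkeeping rather than depth. One point is the boundary case $\beta=2$. The other is making the Chernoff exponent uniform despite the $-1$ offsets and integrality of $\beta n$. I will handle the offsets by bounding $n/(N-1)\ge 1/\beta$ for large $n$ and using monotonicity of $D(\cdot\|\frac12)$ above $\frac12$.
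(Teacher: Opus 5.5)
Your proposal is correct and follows the paper's route. Wendel's theorem turns $p_{n,K_a(n)}$ into $\mathbb{P}[\mathrm{Bin}(K_a(n)-1,\frac12)\le n-1]$, and the same case split in $\beta$ follows; your non-asymptotic Chernoff bound (with monotonicity of $D(\cdot\|\frac12)$) and the exact symmetry $p=\frac12$ at $\beta=2$ are cleaner substitutes for the paper's moving-threshold Cram\'er, WLLN, and CLT steps. Your caveat on the ``only if'' is also right: for $0<\beta\le 1$ one has $p_{n,K_a(n)}\equiv 1$, and the paper's proof, like yours, only actually rules out $\beta\ge 2$.
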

\begin{proof}
	The sketch of the proof is the following. By Wendel's Theorem, it yields that 
	\begin{eqnarray}
		&& \!\!\!\!\!\!\!\!\!\!\!\!\! p_{n,K_a(n)} \triangleq  \mathbb{P}\left[\text{a $K_a(n)$-subset is hemispherical}\right]\nonumber\\
		&& \ \ \ \ \ = \frac{1}{2^{K_a(n) - 1}} \sum_{i=0}^{n-1}{K_a(n)-1 \choose i}. \label{WendelToBin1}
	\end{eqnarray}
	($\Leftarrow$) Assume $1<\beta<2$. We show $p_{n,K_a(n)} \rightarrow 1$ exponentially. Let $B_n \sim \mathrm{Bin}(K_a(n)-1,1/2)$. By (\ref{WendelToBin1}), we have $1-p_{n,K_a(n)}= \mathbb{P}[B_n \geq n]$. Since the threshold $n$ scales linearly with the number of trials $K_a(n)-1$ (considering $B_n$ as the summation of $K_a(n)-1$ i.i.d. $\mathrm{Bern}(1/2)$ random variables) and requires a macroscopic deviation from the mean, the probability $\mathbb{P}[B_n \geq n]$ is governed by large deviation principle \cite{Arratia1989, Dubhashi2009, Dembo1998}. By   Cram\'er's theorem on large deviation
\begin{eqnarray}
	&& \!\!\!\!\!\!\!\!\!\!\!\!\!\!\!\!\!\!\!\!\!  \mathbb{P}[B_n\ge n]= \nonumber\\
	&& \!\!\!\!\!\!\!\!\!\!\!\!\!\!\!\!\!\!\!\!\! \exp\!\left\{
-(K_a(n)-1)\,D\!\left(\frac{n}{K_a(n)-1}\bigg\|\frac12\right)
+o(K_a(n))
\right\}, \label{sigg33}
\end{eqnarray}
where $D(.\|.)$ is the relative entropy, resulting into
\begin{eqnarray}
	\lim_{n\to\infty}
-\frac{1}{n}\log\bigl(1-p_{n,K_a(n)}\bigr)
=
\beta\,D\!\left(\frac{1}{\beta}\bigg\|\frac12\right)
>0.
\end{eqnarray}
($\Rightarrow$) As discussed $p_{n,K_a(n)}=\mathbb{P}[B_n \leq n-1]$. Assume $p_{n,K_a(n)} \rightarrow 1$ exponentially as $n\rightarrow \infty$. We show that $1<\beta <2$ necessarily. If $\beta >2$, the threshold $(n-1)/(K_a(n)-1)$ converges to a constant less than $1/2$, so by Weak Law of Large Numbers (WLLN), the normalized sum $B_n/(K_a(n)-1)$ concentrates near $1/2$ and hence $p_{n,K_a(n)} = \mathbb{P}[B_n \leq n-1] \rightarrow 0$. If $\beta=2$, by Central Limit Theorem (CLT) and this fact that $(n-1-\mathbb{E}[B_n])/\sqrt{\mathrm{Var}(B_n)}$ tends to $0$, we get $p_{n,K_a(n)} \rightarrow 1/2$. Finally, when $0<\beta\leq 1$, the condition $K_a(n)-1 \leq n-1$ holds for all $n$, resulting into counting all binomial coefficients in (\ref{WendelToBin1}) and hence $p_{n,K_a(n)}=1$ for all $n$. The complete proof is given in Appendix A.
\end{proof}

We know that the sum rate across the channel is \( R_{\mathrm{sum}} = K_a(n)\log M_n / n = \beta d \log n \) where $d>2$. Under the assumed codebook geometry, the per user power is \( P \). So, the sum capacity of the Gaussian multiple-access channel is \( C_{\mathrm{sum}} = \frac{1}{2}\log(1+n\beta P) \sim \frac{1}{2}\log n \) as \( n \to \infty \). A necessary condition for reliable decoding is therefore \( R_{\mathrm{sum}} \leq C_{\mathrm{sum}} \), which implies \( \beta \leq \frac{1}{2d} \), and in particular \( \beta < \frac{1}{4} \) for sufficiently large \( n \). Moreover, by Theorem \ref{K_aHemispherical}, the selected \( K_a(n) \)-subset for this range of $\beta$ is hemispherical with probability one. Consequently, reliable decoding reduces to a search restricted to a hemisphere.



\section{Geometric Decoding}\label{GeoDec}
 
Under the many-access channel regime with $0<\beta<1/4$, we proved that the chosen $K_a(n)$-subset is a hemispherical set with high probability and reliable decoding is possible. Upon observing the channel output $\mbox{\boldmath $Y$}$, the first stage of decoding (pre-filtering) is to restrict the whole sphere based on the estimated direction $\hat{\mbox{\boldmath $u$}} = \mbox{\boldmath $Y$}/ \| \mbox{\boldmath $Y$}\|$. We define the \emph{limit hemisphere} with direction $\hat{\mbox{\boldmath $u$}}$ as
\begin{eqnarray}
	\hat{\mathcal{H}}\triangleq \left \{ \mbox{\boldmath $s$}_{j}  \in  \mathbb{S}^{n-1}: \langle \mbox{\boldmath $s$}_{j},  \hat{\mbox{\boldmath $u$}} \rangle \geq 0 \right \}.
\end{eqnarray}
While $\hat{\mathcal{H}}$ represents the ideal region, the noisy estimate $\hat{\mbox{\boldmath $u$}}$ is insufficient to precisely determine the actual hemisphere even as $n\rightarrow\infty$ (Theorem \ref{ThmConvToc}). Furthermore, restricting the search to $\hat{\mathcal{H}}$ may exclude relevant transmitted codewords (Theorem \ref{THMRetention}). To ensure high-probability retention of the codewords in the restricted space, we define a $\tau_n$-enlargement of the limit hemisphere consisting of a sequence of spherical caps $\{ \hat{\mathcal{H}}_{n}\}$ converging to $\hat{\mathcal{H}}$ as 
\begin{eqnarray}
	\hat{\mathcal{H}}_n \triangleq \left\{ \mbox{\boldmath$s$}_{j}\in \mathbb{S}^{n-1}: \langle \mbox{\boldmath $s$}_{j}, \hat{\mbox{\boldmath $u$}}\rangle \geq \tau_n \right\}, \label{SeqCaps}
\end{eqnarray}
for a sequence $\tau_n \rightarrow 0^{-}$. We characterize the convergence $\hat{\mathcal{H}}_{n} \rightarrow \hat{\mathcal{H}}$ using Hausdorff distance $d_H$ on the space of compact subsets of $\mathbb{S}^{n-1}$. Let $d(\mbox{\boldmath$x$}, \mbox{\boldmath $y$}) = \arccos\langle \mbox{\boldmath $x$}, \mbox{\boldmath $y$}\rangle$ be the geodesic distance on the sphere. Since $\hat{\mathcal{H}}\subset \hat{\mathcal{H}}_{n}$, the Hausdorff $d_H$ between these sets is defined as $d_{H}(\hat{\mathcal{H}}_{n}, \hat{\mathcal{H}}) \triangleq \sup_{\mbox{\scriptsize \boldmath $x$}\in \hat{\mathcal{H}}_{n}} \inf_{\mbox{\scriptsize \boldmath $y$}\in \hat{\mathcal{H}}} d(\mbox{\boldmath $x$}, \mbox{\boldmath $y$})$. The $\mathrm{sup}$ is achieved by $\mbox{\boldmath $x$}$ on the boundary of $\hat{\mathcal{H}}_{n}$ (where $\langle \mbox{\boldmath $x$}, \hat{\mbox{\boldmath $u$}}\rangle = \tau_n$). Also, the closest $\mbox{\boldmath $y$}$ on $\hat{\mathcal{H}}$ lies on the boundary where $\langle \mbox{\boldmath $y$}, \hat{\mbox{\boldmath $u$}}\rangle = 0$. Therefore, the geodesic distance simplifies to the angular distance between boundaries and $\hat{\mbox{\boldmath $u$}}$, resulting into $d_H(\hat{\mathcal{H}}_{n}, \hat{\mathcal{H}}) = \arccos \tau_n - \frac{\pi}{2}$, which converges to $0$ as $\tau_n \rightarrow 0^{-}$ and $n\rightarrow \infty$. Here, since we normalized the direction of spherical caps, we normalized the codewords containing in them likewise.

The second stage of decoding is applying the maximum likelihood to the codewords limited to $\{\hat{\mathcal{H}}_{n}\}$, formally for a fixed $n$

\begin{eqnarray}
	\hat{\mathcal{S}} = \arg \min_{\substack{\mathcal{S}\subset \{\hat{\mathcal{H}}_{n}\} \\ | \mathcal{S}| = K_a(n)}} \left \| \mbox{\boldmath $Y$} - \sum_{i\in \mathcal{S}}\mbox{\boldmath $x$}_{i} \right\| ^2. \label{ML}
\end{eqnarray}

\begin{thm}\label{ThmConvToc}
	Let $\mbox{\boldmath $u$}$ be the actual direction of hemisphere and $\| \mbox{\boldmath $u$}\| = 1$. Then
	\begin{eqnarray}
		\langle \mbox{\boldmath $u$}, \hat{\mbox{\boldmath $u$}} \rangle \xrightarrow{P} c  = \sqrt{\frac{2 \beta}{2 \beta + \pi}}\label{c}.
	\end{eqnarray}
\end{thm}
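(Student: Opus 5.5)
We interpret the statement in the way that makes the constant in (\ref{c}) come out. Condition on the transmitted subset being hemispherical with pole $\mbox{\boldmath $u$}$; by Theorem \ref{K_aHemispherical} this holds for $0<\beta<1/4$. By rotational invariance we take $\mbox{\boldmath $u$}=\mbox{\boldmath $e$}_1$. We model the transmitted codewords as $\mbox{\boldmath $x$}_m=\sqrt{nP}\,\mbox{\boldmath $X$}_m/\|\mbox{\boldmath $X$}_m\|$, where the first coordinate of $\mbox{\boldmath $X$}_m$ is replaced by its absolute value $|X_{m,1}|$. This is the reflection of a uniform point into the hemisphere $\{\langle \mbox{\boldmath $s$},\mbox{\boldmath $u$}\rangle\ge 0\}$, and it keeps the law uniform on that hemisphere. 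The proof then reduces to computing the first two moments of $\mbox{\boldmath $Y$}$ along $\mbox{\boldmath $u$}$ and orthogonal to it, and applying the WLLN.

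\textbf{Step 1 (component along $\mbox{\boldmath $u$}$).} We have $\langle \mbox{\boldmath $Y$},\mbox{\boldmath $u$}\rangle=\sum_{m\in\mathcal S}\sqrt{nP}\,|X_{m,1}|/\|\mbox{\boldmath $X$}_m\|+Z_1$. Since $\|\mbox{\boldmath $X$}_m\|/\sqrt n\to 1$ uniformly over the $K_a(n)$ users with high probability (chi-square concentration plus a union bound), this equals $\sqrt P\sum_m |X_{m,1}|(1+o(1))+O_P(1)$. The $|X_{m,1}|$ are i.i.d.\ with mean $\sqrt{2/\pi}$, so the WLLN gives $\langle \mbox{\boldmath $Y$},\mbox{\boldmath $u$}\rangle/n \xrightarrow{P} \beta\sqrt{2P/\pi}$. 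Squaring, $\langle \mbox{\boldmath $Y$},\mbox{\boldmath $u$}\rangle^2/n^2\xrightarrow{P} 2\beta^2P/\pi$.

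\textbf{Step 2 (total energy).} Expand
\[
\|\mbox{\boldmath $Y$}\|^2=\sum_{m}\|\mbox{\boldmath $x$}_m\|^2+\sum_{m\neq m'}\langle \mbox{\boldmath $x$}_m,\mbox{\boldmath $x$}_{m'}\rangle+2\sum_m\langle \mbox{\boldmath $x$}_m,\mbox{\boldmath $Z$}\rangle+\|\mbox{\boldmath $Z$}\|^2 .
\]
The first sum equals $K_a(n)nP=\beta n^2P$ exactly. The cross terms split into a first-coordinate part and an orthogonal part. The first-coordinate part is $\big(\sum_m x_{m,1}\big)^2-\sum_m x_{m,1}^2$, which by Step 1 is $n^2\cdot 2\beta^2P/\pi+o_P(n^2)$. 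The orthogonal part has mean zero, because the remaining coordinates are symmetric and independent across users. Its variance is of order $K_a(n)^2\,(nP)^2/n=O(n^3)$, so it is $o_P(n^2)$. The noise terms are $O_P(n^{3/2})$ and $O_P(n)$. Hence $\|\mbox{\boldmath $Y$}\|^2/n^2\xrightarrow{P}\beta P+2\beta^2P/\pi$.

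\textbf{Step 3 (conclusion).} By the continuous mapping theorem,
\[
\langle \mbox{\boldmath $u$},\hat{\mbox{\boldmath $u$}}\rangle=\frac{\langle \mbox{\boldmath $Y$},\mbox{\boldmath $u$}\rangle}{\|\mbox{\boldmath $Y$}\|}\xrightarrow{P}\sqrt{\frac{2\beta^2P/\pi}{\beta P+2\beta^2P/\pi}}=\sqrt{\frac{2\beta}{2\beta+\pi}}=c .
\]
The limit is independent of $P$ and of the noise, as the statement asserts.

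\textbf{Main obstacle.} The delicate point is Step 2: controlling the $K_a(n)^2$ pairwise inner products. The normalization by $\|\mbox{\boldmath $X$}_m\|$ introduces weak dependence between a codeword's first coordinate and its other coordinates. We would handle this by working first with the unnormalized Gaussians $\mbox{\boldmath $X$}_m$, where the orthogonal coordinates are exactly independent of $|X_{m,1}|$ and a direct variance computation applies. The factors $\sqrt n/\|\mbox{\boldmath $X$}_m\|=1+O_P(n^{-1/2}\sqrt{\log n})$ are then absorbed uniformly over users, each contributing an $o_P(n^2)$ error. A secondary issue is making precise what "the actual direction" means, namely the conditioning and reflection model above; we would state this explicitly at the start of the proof.
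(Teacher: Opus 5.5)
Your proposal is correct and follows essentially the same route as the paper. Under the same model (transmitted codewords i.i.d.\ uniform on the hemisphere with pole $\mbox{\boldmath $u$}$), both arguments show $\langle \mbox{\boldmath $Y$},\mbox{\boldmath $u$}\rangle/n \xrightarrow{P} \beta\sqrt{2P/\pi}$ and $\|\mbox{\boldmath $Y$}\|^2/n^2 \xrightarrow{P} \beta P+2\beta^2P/\pi$, then take the ratio; your split of the cross terms into first-coordinate and orthogonal parts is just a rearrangement of the paper's $S_{\parallel}^2+\|\mbox{\boldmath $S$}_{\perp}\|^2$, and your explicit variance bound on the $K_a(n)^2$ orthogonal cross terms is a tighter justification than the paper's appeal to pairwise asymptotic orthogonality.
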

\begin{proof}
	The sketch of the proof is the following. As inspired by directional statistics \cite{DS1999}, we decompose $\mbox{\boldmath $x$}_{i} = t_i\mbox{\boldmath $u$} + \mbox{\boldmath $q$}_{i}$, where $t_i=\langle \mbox{\boldmath $x$}_{i},\mbox{\boldmath $u$}\rangle$, $\mbox{\boldmath $q$}_{i} \in \mathcal{U}^{\perp}$, and $\mathcal{U}^{\perp}$ is the set of vectors orthogonal to $\mbox{\boldmath $u$}$. Then, by Gaussian projection approximation lemma, it yields that $\sqrt{n}t$ converges in distribution to $\mathcal{N}(0,1)$ for $t=\langle \mbox{\boldmath $u$},\mbox{\boldmath $x$}/\sqrt{nP}\rangle$ with fixed $\mbox{\boldmath $u$}$ and uniformly distributed  $\mbox{\boldmath $x$}$ on sphere. Leveraging this, we discuss how each term in $\langle \mbox{\boldmath $Y$}/ \| \mbox{\boldmath $Y$} \|,\mbox{\boldmath $u$}\rangle$ converges in probability. The complete proof is given in Appendix B.
\end{proof}

We next define the \emph{retention} probability over $\{ \hat{\mathcal{H}}_{n}\}$ for a transmitted codeword as
\begin{eqnarray}
	&& p_{\mathrm{ret},n}(\tau_n)\label{retention}\\
	&& \ \ \ \ \triangleq \mathbb{P}[\langle \mbox{\boldmath $s$}_{i}, \hat{\mbox{\boldmath $u$}}\rangle \geq \tau_n | \text{$i$ is transmitted codeword index}]\nonumber,
\end{eqnarray}
which can be interpreted as the portion of sent codewords that will remain in the restricted search space.

\begin{thm}\label{THMRetention}
	Let $0<\beta<1/4$. For the defined retention probability in (\ref{retention}), we have $p_{\mathrm{ret},n}(\tau_n) \rightarrow 1,$
	if $\tau_n = -a_n/ \sqrt{n}$ where $a_n \rightarrow \infty$ and $a_n = o(\sqrt{n})$,
	and for the special case of searching only over $\hat{\mathcal{H}}$, we have
	\begin{eqnarray}\label{RetentionLimit}
		p_{\mathrm{ret},n}(0) \rightarrow \frac{1}{2} + \frac{1}{\pi} \arcsin c.
	\end{eqnarray}
\end{thm}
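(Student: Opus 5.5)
We would reduce both claims to the limiting law of the scaled projection $W_n \triangleq \sqrt{n}\,\langle \mbox{\boldmath $s$}_i, \hat{\mbox{\boldmath $u$}}\rangle$ for a transmitted index $i$, since
\begin{eqnarray}
p_{\mathrm{ret},n}(\tau_n)=\mathbb{P}[W_n \geq \sqrt{n}\,\tau_n]. \nonumber
\end{eqnarray}
The tool is the Gaussian projection approximation from the proof of Theorem \ref{ThmConvToc}: for a fixed unit vector $\mbox{\boldmath $v$}$ and $\mbox{\boldmath $s$}$ uniform on $\mathbb{S}^{n-1}$, $\sqrt{n}\langle \mbox{\boldmath $s$},\mbox{\boldmath $v$}\rangle \xrightarrow{d}\mathcal{N}(0,1)$. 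Applied jointly to two fixed directions, it shows that $\sqrt{n}(\langle \mbox{\boldmath $s$},\mbox{\boldmath $v$}_1\rangle,\langle \mbox{\boldmath $s$},\mbox{\boldmath $v$}_2\rangle)$ is asymptotically bivariate normal with correlation $\langle \mbox{\boldmath $v$}_1,\mbox{\boldmath $v$}_2\rangle$.

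For the first claim we only need tightness of $W_n$. Write $\mbox{\boldmath $Y$}=\mbox{\boldmath $x$}_i+\mbox{\boldmath $Y$}_{-i}$, where $\mbox{\boldmath $Y$}_{-i}$ is independent of $\mbox{\boldmath $x$}_i$. Then
\begin{eqnarray}
\langle \mbox{\boldmath $x$}_i,\mbox{\boldmath $Y$}\rangle = nP+\sum_{m\neq i}\langle \mbox{\boldmath $x$}_i,\mbox{\boldmath $x$}_m\rangle+\langle \mbox{\boldmath $x$}_i,\mbox{\boldmath $Z$}\rangle. \nonumber
\end{eqnarray}
Conditionally on $\mbox{\boldmath $Y$}_{-i}$, the last two terms together equal $\sqrt{nP}\,\|\mbox{\boldmath $Y$}_{-i}\|$ times a projection that is $O_P(1/\sqrt{n})$. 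Together with $\|\mbox{\boldmath $Y$}\|^2/n^2 \xrightarrow{P} \beta P$ (a WLLN argument on the diagonal and cross terms, as in Appendix B), this gives $W_n = O_P(1)$. Heuristically $W_n \xrightarrow{d}\mathcal{N}(1/\sqrt{\beta},1)$, though only tightness is needed here. Since $\sqrt{n}\tau_n=-a_n\to-\infty$, we get $\mathbb{P}[W_n\ge -a_n]\to 1$. The condition $a_n=o(\sqrt{n})$ only ensures $\tau_n\to 0^-$, so the caps still converge in Hausdorff distance.

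For the limit \eqref{RetentionLimit} we follow the hemispherical picture. By Theorem \ref{K_aHemispherical} with $\beta<1/4$, the transmitted codewords lie in the hemisphere with axis $\mbox{\boldmath $u$}$, so a transmitted $\mbox{\boldmath $s$}_i$ is distributed as a uniform point conditioned on $\langle \mbox{\boldmath $s$}_i,\mbox{\boldmath $u$}\rangle\ge 0$. Apply the bivariate projection lemma with $\mbox{\boldmath $v$}_1=\mbox{\boldmath $u$}$ and $\mbox{\boldmath $v$}_2=\hat{\mbox{\boldmath $u$}}$; their correlation converges in probability to $c$ by Theorem \ref{ThmConvToc}. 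Then
\begin{eqnarray}
p_{\mathrm{ret},n}(0)=\frac{\mathbb{P}[G_1\ge0,\,G_2\ge0]}{\mathbb{P}[G_1\ge 0]}+o(1), \nonumber
\end{eqnarray}
where $(G_1,G_2)$ is standard bivariate normal with correlation $c$. Sheppard's orthant formula $\mathbb{P}[G_1\ge0,G_2\ge0]=\frac14+\frac{1}{2\pi}\arcsin c$ then yields $\frac12+\frac1\pi\arcsin c$. The replacement of the random correlation $\langle \mbox{\boldmath $u$},\hat{\mbox{\boldmath $u$}}\rangle$ by $c$ follows from continuity of $\arcsin$ and bounded convergence.

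The main obstacle is that $\hat{\mbox{\boldmath $u$}}$ is not independent of $\mbox{\boldmath $s$}_i$, since $\mbox{\boldmath $x$}_i$ is one of the summands of $\mbox{\boldmath $Y$}$, so the projection lemma cannot be applied directly. The self-term contributes an $O(1)$ shift $\approx 1/\sqrt{\beta}$ to $W_n$, which is exactly the size that matters at threshold $0$. We would first try conditioning on $\mbox{\boldmath $Y$}_{-i}$: apply the lemma to the independent direction $\mbox{\boldmath $Y$}_{-i}/\|\mbox{\boldmath $Y$}_{-i}\|$ and track the self-term explicitly. We would then check whether this bias is absorbed into the correlation structure that produces $c$ in Theorem \ref{ThmConvToc}, so that the orthant formula with correlation $c$ remains valid. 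The tightness claim is unaffected by this issue; it matters only for identifying the exact constant in \eqref{RetentionLimit}.
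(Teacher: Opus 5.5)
Your argument for the first claim is correct and follows a different, more robust route than the paper's. The paper first derives a full limit law $T_n\xrightarrow{d}cN_1+\sqrt{1-c^2}N_2$ from the decomposition $\hat{\mbox{\boldmath $u$}}=c_n\mbox{\boldmath $u$}+\sqrt{1-c_n^2}\,\mbox{\boldmath $v$}_n$, and then invokes Prohorov's theorem. That limit law is itself off by the self-term discussed below, though this does not affect tightness. You instead condition on $\mbox{\boldmath $Y$}_{-i}$. This makes the projection lemma legitimately applicable to the independent direction $\mbox{\boldmath $Y$}_{-i}/\|\mbox{\boldmath $Y$}_{-i}\|$, and you then need only $\|\mbox{\boldmath $Y$}\|\asymp n$. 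The self-term $\sqrt{n}\sqrt{nP}/\|\mbox{\boldmath $Y$}\|$ is nonnegative, so it can only help against the threshold $-a_n\to-\infty$. Your route therefore uses neither Theorem~\ref{ThmConvToc}, nor the hemisphere model, nor $\beta<1/4$. One small correction: the constant $\beta P$ you quote for $\|\mbox{\boldmath $Y$}\|^2/n^2$ is the unconditional one, whereas Appendix~B gets $P(2\beta^2/\pi+\beta)$ under its hemisphere model. Any positive limit suffices for your purpose.

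The second claim has a genuine gap, exactly where you suspect, and the check you propose would come out negative.

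\textbf{Where it fails.} Applying the bivariate projection lemma with $\mbox{\boldmath $v$}_2=\hat{\mbox{\boldmath $u$}}$ is not justified, because $\hat{\mbox{\boldmath $u$}}$ depends on $\mbox{\boldmath $s$}_i$. The paper's proof makes the same move: in (\ref{JointLimitTau}) it treats $\mbox{\boldmath $v$}_n$ as a fixed direction. In fact $\mbox{\boldmath $v$}_n$ is proportional to the component of $\mbox{\boldmath $Y$}$ orthogonal to $\mbox{\boldmath $u$}$, which contains the summand $\sqrt{nP}\,\mbox{\boldmath $q$}_i$. Matching the paper therefore does not close the gap.

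\textbf{What the correct limit is.} Carry out your leave-one-out computation under the paper's hemisphere model. You get $\sqrt{n}\langle \mbox{\boldmath $s$}_i,\mbox{\boldmath $v$}_n\rangle=\bigl(1+\langle \mbox{\boldmath $q$}_i,\sum_{j\neq i}\mbox{\boldmath $q$}_j\rangle\bigr)/\sqrt{\beta}+o_P(1)\xrightarrow{d}1/\sqrt{\beta}+N_2$, not $N_2$. Hence $T_n\xrightarrow{d}W_0+1/\sqrt{2\beta^2/\pi+\beta}$, and $p_{\mathrm{ret},n}(0)\to\mathbb{P}[\beta\sqrt{2/\pi}\,|N_1|+\sqrt{\beta}\,N_2\ge -1]$. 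This is strictly larger than $\frac12+\frac1\pi\arcsin c$. Under the i.i.d.-on-the-sphere model that your first part uses, $T_n\xrightarrow{d}\mathcal{N}(1/\sqrt{\beta},1)$, the limit is $1-Q(1/\sqrt{\beta})$, and $c$ plays no role at all.

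Your two parts also rest on different models. Theorem~\ref{K_aHemispherical} only places the transmitted points in \emph{some} data-dependent hemisphere. It does not make them i.i.d.\ uniform on a fixed one.

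Your orthant/Sheppard computation is valid for the leave-one-out direction $\mbox{\boldmath $Y$}_{-i}/\|\mbox{\boldmath $Y$}_{-i}\|$. That direction is independent of $\mbox{\boldmath $s$}_i$ and still has correlation tending to $c$ with $\mbox{\boldmath $u$}$. It is not valid for $\hat{\mbox{\boldmath $u$}}=\mbox{\boldmath $Y$}/\|\mbox{\boldmath $Y$}\|$ as defined in (\ref{retention}). No completion of this route yields (\ref{RetentionLimit}) as stated. That limit appears only if one discards the self-term, which is of order one at the $\sqrt{n}$ scale.
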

\begin{proof}
	The sketch of the proof is the following. Analogous to the decomposition used in Theorem \ref{ThmConvToc}, write
$\hat{\mbox{\boldmath $u$}}=c_n \mbox{\boldmath $u$}+\sqrt{1-c_n^2}\mbox{\boldmath $v$}_{n}$,
where $\mbox{\boldmath $v$}_{n}\in\mathcal{U}^\perp$ and $\|\mbox{\boldmath $v$}_{n}\|=1$. Here
$c_n=\langle \hat{\mbox{\boldmath $u$}},\mbox{\boldmath $u$}\rangle \xrightarrow{P} c$ by Theorem \ref{ThmConvToc}. Therefore,
\begin{equation}
	\left\langle \mbox{\boldmath $s$}_i,\hat{\mbox{\boldmath $u$}}\right\rangle
	=
	c_n \left\langle \mbox{\boldmath $s$}_i,\mbox{\boldmath $u$}\right\rangle
	+
	\sqrt{1-c_n^2}\left\langle \mbox{\boldmath $s$}_i,\mbox{\boldmath $v$}_n	\right\rangle .
\end{equation}

Define the score $T_n \triangleq \sqrt{n}\left\langle \mbox{\boldmath $s$}_i,\hat{\mbox{\boldmath $u$}}\right\rangle$, and threshold $\gamma_n \triangleq \sqrt{n}\,\tau_n$.
Then
\begin{eqnarray}
	p_{\mathrm{ret},n}(\tau_n)=\mathbb{P}[T_n\ge \gamma_n | \text{$i$ is the sent codeword index}].
\end{eqnarray}

We first consider the $\hat{\mathcal{H}}$ case. Note that for this case, there is no sequence of $\tau_n$, but the static threshold $0$. For fixed orthonormal vectors $(\mbox{\boldmath $u$},\mbox{\boldmath $v$}_n)$ and 
\begin{eqnarray}
	\mbox{\boldmath $s$}_i\sim \mathrm{Unif}	\Big\{\mbox{\boldmath $s$}\in\mathbb{S}^{n-1}:\langle \mbox{\boldmath $s$},\mbox{\boldmath $u$}\rangle\ge 0\Big\},
\end{eqnarray}
we have
\begin{equation}\label{JointLimitTau}
	\left(
	\sqrt{n}\left\langle \mbox{\boldmath $s$}_i,	\mbox{\boldmath $u$}\right\rangle,
	\sqrt{n}\left\langle \mbox{\boldmath $s$}_i,	\mbox{\boldmath $v$}_n\right\rangle
	\right)
	\xrightarrow{d}(|N_1|,N_2),
\end{equation}
where $N_1,N_2\sim\mathcal N(0,1)$ are independent. Indeed, after rotating coordinates so that
$\mbox{\boldmath $u$}=\mbox{\boldmath $e$}_{1}$ and $\mbox{\boldmath $v$}_n=\mbox{\boldmath $e$}_{2}$, the hemisphere representation
\begin{eqnarray}
	\mbox{\boldmath $x$}_i=\sqrt{nP}\,\frac{\mbox{\boldmath $P$}}{\|\mbox{\boldmath $P$}\|},
	\
	\mbox{\boldmath $P$}\sim \mathcal N(\mathbf 0,\mathbf I_n)\ \text{conditioned on } P_1\ge 0,
\end{eqnarray}
implies, by the WLLN and Slutsky’s lemma \cite{vander}, that \eqref{JointLimitTau} holds. Note that here $\mbox{\boldmath $P$}=(P_1,..., P_n)$.

Now apply the decomposition of $\hat{\mbox{\boldmath $u$}}$
\begin{eqnarray}
	T_n= c_n\sqrt{n}\left\langle \mbox{\boldmath $s$}_i,\mbox{\boldmath $u$}\right\rangle
	+
	\sqrt{1-c_n^2}\sqrt{n}\left\langle \mbox{\boldmath $s$}_i,\mbox{\boldmath $v$}_n\right\rangle. \label{18}
\end{eqnarray}
Combining (\ref{18}) with \eqref{JointLimitTau} , by Slutsky’s lemma, it yields
\begin{eqnarray}
	T_n \xrightarrow{d} W_0\triangleq c|N_1|+\sqrt{1-c^2}\,N_2.
\end{eqnarray}

Since $W_0$ is a continuous random variable, the probability $\mathbb P[W_0=0]=0$.
Hence, by the Portmanteau lemma \cite{vander},
\begin{eqnarray}
	&& p_{\mathrm{ret},n}(\tau_n)=
\mathbb P[T_n\ge 0 | \text{$i$ is sent}]
\rightarrow \mathbb P[W_0\ge 0 ].
\end{eqnarray}
With simple calculations, we get
\begin{eqnarray}
	p_{\mathrm{ret},n}(0)\to \mathbb P[W_0\ge 0]=
\frac12+\frac{1}{\pi}\arcsin c.
\end{eqnarray}

Finally, if $\tau_n=-a_n/\sqrt n$ with $a_n\to\infty$ and $a_n=o(\sqrt n)$, then $\tau_n\to 0^-$ while $\gamma_n=\sqrt n\,\tau_n=-a_n\to -\infty$. In this case, there is no truncation ($P_1\geq 0$). Therefore, the pair in (\ref{JointLimitTau}) converges in distribution to $(N_1,N_2)$, resulting into $T_n \xrightarrow{d} W \triangleq cN_1+\sqrt{1-c^2}N_2$. Leveraging this, by Prohorov's theorem \cite{vander}, we know that $T_n=O_P(1)$, which follows $\mathbb P[T_n\ge \gamma_n]\to 1$ as $\gamma_n \rightarrow -\infty$ . The more detailed proof is given in Appendix C.
\end{proof}

Before applying maximum-likelihood (ML) estimator to the filtered codewords, an immediate necessary condition is that at least $K_a(n)$ codewords lie within $\{ \hat{\mathcal{H}}_{n}\}$. This raises the question: does the probability of having fewer than $K_a(n)$ codewords in the sequence of spherical caps $\mathbb{P}\left[|\hat{\mathcal{H}}_{n}| < K_a(n)\right]$ tend to zero as $n\rightarrow \infty$? In the following theorem, we prove that under the many-access assumption with $0<\beta<2$, this probability indeed converges to $0$ as $n\rightarrow \infty$.

\begin{thm}\label{H_nK_a}
	Consider the described setup, then $\mathbb{P}\left[|\hat{\mathcal{H}}_n| <  K_a(n) \right] \rightarrow 0$ as $n \rightarrow \infty$.
\end{thm}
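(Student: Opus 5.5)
The argument counts the codewords that land in $\hat{\mathcal{H}}_n$ and shows that this count is of order $M_n/2$, far above $K_a(n)=\beta n$. Let $\mathcal{S}$ be the transmitted subset and write $|\hat{\mathcal{H}}_n| \ge N_n \triangleq \sum_{j\notin\mathcal{S}} \mathbf{1}\{\langle \mbox{\boldmath $s$}_j,\hat{\mbox{\boldmath $u$}}\rangle \ge \tau_n\}$. Discarding the transmitted codewords removes any dependence between a codeword and $\mbox{\boldmath $Y$}$. The key point is that $\hat{\mbox{\boldmath $u$}}=\mbox{\boldmath $Y$}/\|\mbox{\boldmath $Y$}\|$ is a function of $\{\mbox{\boldmath $x$}_m\}_{m\in\mathcal{S}}$ and $\mbox{\boldmath $Z$}$ only. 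The non-transmitted codewords are i.i.d.\ uniform on the sphere and independent of $\hat{\mbox{\boldmath $u$}}$.

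Conditioning on $\hat{\mbox{\boldmath $u$}}$ and using rotational invariance, we get
\[
N_n \mid \hat{\mbox{\boldmath $u$}} \sim \mathrm{Bin}\bigl(M_n-K_a(n),\,q_n\bigr), \qquad q_n \triangleq \mathbb{P}[\langle \mbox{\boldmath $s$},\mbox{\boldmath $e$}_1\rangle \ge \tau_n].
\]
Here $\mbox{\boldmath $s$}$ is uniform on $\mathbb{S}^{n-1}$, and $q_n$ is deterministic and does not depend on $\hat{\mbox{\boldmath $u$}}$. Since $\tau_n\le 0$, symmetry of $\langle \mbox{\boldmath $s$},\mbox{\boldmath $e$}_1\rangle$ gives $q_n\ge 1/2$. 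The Gaussian projection approximation used in Theorem \ref{ThmConvToc} is not even needed here; when $\tau_n=-a_n/\sqrt n$ it would in fact show $q_n\to 1$.

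The tail bound then follows directly. The mean satisfies $\mu_n=(M_n-K_a(n))q_n \ge (n^d-\beta n)/2$, which for $d>2$ (indeed for any $d>1$) dominates $2\beta n$ for large $n$. Because $K_a(n)\le \mu_n/2$ eventually, the multiplicative Chernoff bound gives
\[
\mathbb{P}[N_n<K_a(n)] \le \mathbb{P}[N_n\le \mu_n/2] \le \exp(-\mu_n/8) \le \exp\bigl(-(n^d-\beta n)/16\bigr) \to 0.
\]
Chebyshev's inequality would also suffice, since $\mathrm{Var}(N_n)/(\mu_n-K_a(n))^2 = O(1/\mu_n)$. Taking expectation over $\hat{\mbox{\boldmath $u$}}$ preserves the bound because it is uniform in $\hat{\mbox{\boldmath $u$}}$. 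Therefore $\mathbb{P}[|\hat{\mathcal{H}}_n|<K_a(n)]\le \mathbb{P}[N_n<K_a(n)]\to 0$, with doubly-exponential-in-$\log n$ (super-polynomial) speed. The argument never uses the restriction on $\beta$; it only needs $K_a(n)=o(M_n)$, which is consistent with the remark that it holds for $0<\beta<2$.

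The one step that needs care is justifying the independence of $\hat{\mbox{\boldmath $u$}}$ from the non-transmitted codewords. I would argue it by conditioning on the realized subset $\mathcal{S}$. The subset is chosen independently of the codebook, so the codewords indexed outside $\mathcal{S}$ remain i.i.d.\ uniform given $(\{\mbox{\boldmath $x$}_m\}_{m\in\mathcal{S}},\mbox{\boldmath $Z$})$. One must also handle the event $\mbox{\boldmath $Y$}=\mathbf{0}$, on which $\hat{\mbox{\boldmath $u$}}$ is undefined, but it has probability zero since $\mbox{\boldmath $Z$}$ is Gaussian. If the subset model allowed repeated messages, $|\mathcal{S}|\le K_a(n)$ changes nothing in the count.
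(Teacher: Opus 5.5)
Your proof is correct, and it takes a genuinely leaner route than the paper's. The paper first passes from $\hat{\mathcal{H}}_n$ to the hemisphere $\hat{\mathcal{H}}$ by inclusion and splits $|\hat{\mathcal{H}}|=H_{\mathrm{true}}^{(n)}+H_{\mathrm{other}}^{(n)}$. It then lower-bounds $\mathbb{E}|\hat{\mathcal{H}}|$ using the retention limit $p_{\mathrm{ret},n}(0)\to\frac12+\frac1\pi\arcsin c$ from Theorem~\ref{THMRetention} together with $p_n\to\frac12$, which it argues via Lindeberg--Feller. Finally it controls the two deviations separately: McDiarmid's inequality for $H_{\mathrm{true}}^{(n)}$ and Hoeffding's inequality for the conditionally binomial $H_{\mathrm{other}}^{(n)}$. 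Your argument is in effect the $H_{\mathrm{other}}^{(n)}$ half alone, with the transmitted codewords' contribution bounded below by zero. That costs nothing, since they contribute $O(n)$ against a count of order $n^d$.

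What this buys you:
\begin{itemize}
\item The proof no longer depends on Theorems~\ref{ThmConvToc} and~\ref{THMRetention}, nor on the modelling of the transmitted codewords as i.i.d.\ uniform on a fixed hemisphere that those proofs use.
\item No CLT is needed. The paper's $p_n$ equals $\frac12$ exactly by symmetry, and your $q_n\ge\frac12$ follows the same way.
\item You justify the independence of $\hat{\mathbf{u}}$ from the unsent codewords by explicitly conditioning on $(\mathcal{S},\{\mathbf{x}_m\}_{m\in\mathcal{S}},\mathbf{Z})$; the paper only asserts it.
\item You get an explicit bound $\exp(-(n^d-\beta n)/16)$, of the same order as the paper's Hoeffding term, valid for every $\beta>0$ and $d>1$.
\end{itemize}

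Your route also sidesteps the paper's McDiarmid step. Its bounded-difference constant of $1$ is not justified: replacing one transmitted codeword moves $\hat{\mathbf{u}}$ and can flip many indicators at once. The step is vacuous anyway, because $0\le H_{\mathrm{true}}^{(n)}\le K_a(n)<\frac{c'}{2}n^d$ for large $n$, so that deviation event is empty. In exchange, the paper's route gives a first-order estimate of $\mathbb{E}|\hat{\mathcal{H}}|$ that includes the retained transmitted codewords, but the statement does not need it.
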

\begin{proof}
	The sketch of the proof is the following. Fix $n$. We know that $\hat{\mathcal{H}}_{n}$ is larger than $\hat{\mathcal{H}}$, and hence
	\begin{eqnarray}
		\mathbb{P}\left[| \hat{\mathcal{H}}_{n}| < K_a(n)\right] < \mathbb{P}\left[| \hat{\mathcal{H}}| < K_a(n)\right]. 
    \end{eqnarray}	 
     Our goal in this proof is then shifted to $\mathbb{P}\left[| \hat{\mathcal{H}}| < K_a(n)\right] \rightarrow 0$. W.L.O.G, we assume $\mathcal{S}=[K_a(n)]$. We split the cardinality $| \hat{\mathcal{H}}|$ into 
	\begin{eqnarray}
		| \hat{\mathcal{H}} | = \underbrace{\!\! \sum_{i=1}^{K_a(n)} \mathbf{1}\{\langle \mbox{\boldmath $s$}_{i}, \hat{\mbox{\boldmath $u$}}\rangle \geq 0 \}}_{\triangleq H_{\mathrm{true}}^{(n)}} +\!\! \underbrace{\sum_{j=K_a(n)+1}^{M_n}\!\!\!\!\! \mathbf{1}\{\langle \mbox{\boldmath $s$}_{j},\hat{\mbox{\boldmath $u$}}\rangle \geq 0 \}}_{\triangleq H_{\mathrm{other}}^{(n)}}.\label{CarH}
	\end{eqnarray}
	A simple observation is that conditional on $\mbox{\boldmath $Y$}$ (or 
$\hat{\mbox{\boldmath $u$}}$), the remaining $M_n-K_a(n)$ points
$\mbox{\boldmath $s$}_{K_a(n)+1},\ldots,\mbox{\boldmath $s$}_{M_n}$ are i.i.d. uniform on
$\mathbb S^{n-1}$ and independent of $\hat{\mbox{\boldmath $u$}}$ (or $\mbox{\boldmath $Y$}$). Therefore, conditional on
$\mbox{\boldmath $Y$}$, the count $H^{(n)}_{\mathrm{other}}$ has distribution $H^{(n)}_{\mathrm{other}} \mid \mbox{\boldmath $Y$} \sim \mathrm{Bin}\!\left(M_n-K_a(n),\,p_n\right)$, where $p_n = \mathbb P\!\left[\sqrt{n}\left\langle \mbox{\boldmath $s$}_{j},\hat{\mbox{\boldmath $u$}}\right\rangle \ge 0\right]$.

By rotational symmetry, we can rotate the direction of $\hat{\mbox{\boldmath $u$}}$ to $\mbox{\boldmath $e$}_{1}$ without changing the distribution of $\sqrt{n}\langle \mbox{\boldmath $s$}_{j}, \hat{\mbox{\boldmath $u$}}\rangle$ \cite{DS1999}, which results into $\sqrt{n}\langle \mbox{\boldmath $s$}_{j}, \hat{\mbox{\boldmath $u$}}\rangle \xrightarrow{d}\mathcal{N}(0,1)$. Hence, by Portmanueau lemma for all $j\in \{K_a(n)+1,...,M_n\}$, it yields
\begin{equation}
	p_n= \mathbb P\!\left[\sqrt{n}\langle\mbox{\boldmath $s$}_{j},	\hat{\mbox{\boldmath $u$}}\rangle \ge 0\right]
\longrightarrow \mathbb P[\mathcal{N}(0,1) \geq 0]=\frac{1}{2}.
\label{pnlimit1}
\end{equation}
A more rigorous way to show (\ref{pnlimit1}) is based on Lindeberg-Feller CLT \cite{Bilingsly2012} which is given in Appendix D.

Taking expectation from both sides of \eqref{CarH}, we have
\begin{eqnarray}
	\mathbb E\!\left[\left|\hat{\mathcal H}\right|\right]
	&& \!\!\!\!\!\!\!\!\!\!\!\! = \mathbb E\!	\left[\sum_{i=1}^{K_a(n)} \mathbf{1}\!\left\{\left\langle \mbox{\boldmath $s$}_i,\hat{\mbox{\boldmath $u$}}\right\rangle \ge 0\right\}\right]
	+ \mathbb E\!\left[\mathbb E\!\left[H^{(n)}_	{\mathrm{other}} \middle| \mbox{\boldmath $Y$}\right]\right] \nonumber\\
	&& \!\!\!\!\!\!\!\!\!\!\!\!\!\!\!\!\!\!\!\!\!\!\!\!\!\!	\!\!\!\!\!\!\!\!\!\! = K_a(n)\,\mathbb P\!\left[\left\langle \mbox{\boldmath $s$}_i,	\hat{\mbox{\boldmath $u$}}\right\rangle \ge 0 \,\middle|\, \mbox{\boldmath $s$}_i \ \text{is sent} \right]
\! + \! \big(M_n-K_a(n)\big)p_n. \label{EXPH_cap}
\end{eqnarray}
As proved in Theorem \ref{THMRetention}, we know that
\begin{equation}
p_{\mathrm{ret},n}(0)=\mathbb P\!\left[\left\langle \mbox{\boldmath $s$}_i,\hat{\mbox{\boldmath $u$}}\right\rangle \ge 0 \,\middle|\, \mbox{\boldmath $s$}_i \ \text{is sent} \right]
\rightarrow \frac{1}{2}+\frac{1}{\pi}\arcsin c,
\end{equation}
as $n \rightarrow \infty$. 
Hence, for every $\epsilon>0$, there exists $n_0>0$ such that for all $n\ge n_0$,
\begin{equation}
	\frac{1}{2}+\frac{1}{\pi}\arcsin c - \epsilon < p_{\mathrm{ret},n}(0) <\frac{1}{2}+\frac{1}{\pi}\arcsin c + \epsilon.
\label{DefLimit_cap}
\end{equation}
By \eqref{EXPH_cap} and \eqref{DefLimit_cap}, along with $M_n= n^d, d>2$ and $K_a(n)=\beta n$, for all large $n$, we obtain $\mathbb E\!\left[\left|\hat{\mathcal H}\right|\right]\geq (\beta n) \left(\frac{1}{2}+\frac{1}{\pi}\arcsin c -\epsilon \right) + (n^d -\beta n)\left(\frac{1}{2}-\epsilon\right) $.
Since the dominant term is $n^d (1/2-\epsilon)$ and $K_a(n)=\beta n$, then $\mathbb E\!\left[\left|\hat{\mathcal H}_n\right|\right] - K_a(n) > c' n^d$ for some constant $c'>0$ and all sufficiently large $n$. We next upper bound the desired probability as
	\begin{eqnarray}
		&&\!\!\!\!\!\!\!\! \mathbb{P}\left[|\hat{\mathcal{H}}_{n}| < K_a(n) \right] \leq \mathbb{P}\left[| \hat{\mathcal{H}}_{n}| -  \mathbb{E}\left[|\hat{\mathcal{H}}_{n}|\right] < -c^{\prime}n^d \right]\nonumber\\
		&& \!\!\!\!\!\!\!\!\!\! \leq \mathbb{P}\left[H_{\mathrm{true}}^{(n)} - \mathbb{E}\left[H_{\mathrm{true}}^{(n)}\right] < -\frac{c^{\prime}}{2}n^d \right] \label{bound_based_on_split0}\\
		&& \ \ \ \ \ \ \ \ \ \ \ + \mathbb{P}\left[H_{\mathrm{other}}^{(n)} - \mathbb{E}\left[H_{\mathrm{other}}^{(n)}\right] < -\frac{c^{\prime}}{2}n^d \right].\label{bound_based_on_split00}
	\end{eqnarray}
	By McDiarmid's inequality, we prove that (\ref{bound_based_on_split0}) is upper bounded by $e^{-\tilde{c}_1n^{2d-1}}$ for some constant $\tilde{c}_{1}>0$. By Hoeffding's inequality, we show  (\ref{bound_based_on_split00}) is also upper bounded by $e^{-\tilde{c}_2n^d}$ for some constant $\tilde{c}_{2}>0$. The more detailed proof is given in Appendix D.
\end{proof}

\section{Per-User Error Probabilities}

The pre-filtering stage begins upon observing the channel output $\mbox{\boldmath $Y$}$. We first estimate the limit hemisphere direction as $\hat{\mbox{\boldmath $u$}}=\mbox{\boldmath $Y$}/  \| \mbox{\boldmath $Y$}\|$ and restrict the search for the $K_a(n)$-subset with $0<\beta<1/4$ to the sequence of spherical caps $\{ \hat{\mathcal{H}}_{n}\}$ converges to $\hat{\mathcal{H}}$ in Hausdorff metric.
Here, we define pre-filtering per-user probability error ($\mathrm{PUPE}_{p}(n)$) for a fixed $n$ as 
\begin{eqnarray}
	\mathrm{PUPE}_{p}(n,\tau_n) \triangleq \frac{1}{K_a(n)} \sum_{i=1}^{K_a(n)} \mathbb{P}\left[\langle \mbox{\boldmath $s$}_{i}, \hat{\mbox{\boldmath $u$}}\rangle < \tau_n \right].\label{DefPUEP_p}
\end{eqnarray}

\begin{thm} \label{THMprePUEP}
	For the described model with $\tau_n=-a_n/\sqrt{n}$ where $a_n\rightarrow \infty$ and $a_n=o(\sqrt{n})$, the per-user probability error $\mathrm{PUEP}_{p}(n,\tau_n) \rightarrow 0$ as $n\rightarrow \infty$. If we restrict our search to only $\hat{\mathcal{H}}$, then $\mathrm{PUPE}_{p}(n,0)  \rightarrow \frac{1}{2} - \frac{1}{\pi}\arcsin c$.
\end{thm}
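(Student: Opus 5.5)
The plan is to reduce $\mathrm{PUPE}_{p}(n,\tau_n)$ to the retention probability of Theorem \ref{THMRetention} and then pass to the limit. For each transmitted index $i\in\mathcal{S}$ (w.l.o.g.\ $\mathcal{S}=[K_a(n)]$) we have
\begin{eqnarray}
\mathbb{P}\left[\langle \mbox{\boldmath $s$}_{i}, \hat{\mbox{\boldmath $u$}}\rangle < \tau_n \right] = 1 - p_{\mathrm{ret},n}(\tau_n), \nonumber
\end{eqnarray}
since the event in (\ref{DefPUEP_p}) is the complement of the retention event in (\ref{retention}), both conditioned on $i$ being sent.

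The first key step is to check that this probability does not depend on $i$. The codewords are drawn i.i.d.\ and uniformly on the sphere, and $\mbox{\boldmath $Y$}$ in (\ref{Y}) is a symmetric function of $\{\mbox{\boldmath $x$}_m\}_{m\in\mathcal{S}}$, so the joint law of $(\mbox{\boldmath $s$}_i,\hat{\mbox{\boldmath $u$}})$ is invariant under permutations of the indices in $\mathcal{S}$. The average in (\ref{DefPUEP_p}) is then an average of $K_a(n)$ identical terms, which gives the exact identity
\begin{eqnarray}
\mathrm{PUPE}_{p}(n,\tau_n)=1-p_{\mathrm{ret},n}(\tau_n) \nonumber
\end{eqnarray}
for every fixed $n$ and every threshold $\tau_n$, including $\tau_n\equiv 0$.

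The second step is to invoke Theorem \ref{THMRetention}, which applies because $0<\beta<1/4$ is the operating regime. With $\tau_n=-a_n/\sqrt{n}$, $a_n\to\infty$ and $a_n=o(\sqrt n)$, that theorem gives $p_{\mathrm{ret},n}(\tau_n)\to 1$, so $\mathrm{PUPE}_{p}(n,\tau_n)\to 0$. For the pure hemisphere $\hat{\mathcal{H}}$, (\ref{RetentionLimit}) gives $p_{\mathrm{ret},n}(0)\to \frac12+\frac1\pi\arcsin c$, hence $\mathrm{PUPE}_{p}(n,0)\to \frac12-\frac1\pi\arcsin c$, with $c$ as in (\ref{c}).

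The only point requiring care is the exchangeability argument. One must confirm that conditioning on "$i$ is the sent codeword index" in (\ref{retention}) refers to the same joint law as the unconditional probability written in (\ref{DefPUEP_p}), where $i$ ranges over the transmitted set. If that conditioning is instead understood as averaging over a uniformly chosen transmitted index, the identity holds directly by linearity, with no symmetry argument needed. Beyond this identification, the proof is a direct corollary of Theorem \ref{THMRetention}.
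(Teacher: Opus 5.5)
Your proposal is correct and follows the paper's own proof. The paper appeals to codebook symmetry to identify $\mathrm{PUPE}_{p}(n,\tau_n)$ with $1-p_{\mathrm{ret},n}(\tau_n)$, then reads off both limits from Theorem~\ref{THMRetention}; your exchangeability argument, along with your remark that $0<\beta<1/4$ is needed to invoke that theorem, spells out the steps the paper leaves implicit.
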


\begin{proof}
	According to the codebook symmetry and as a direct consequence of Theorem \ref{THMRetention}, the results are immediate.
\end{proof}

The per-user probability error regarding ML step ($\mathrm{PUPE}_{ML}(n)$) is the expected function of incorrectly identified codewrods defined as
\begin{eqnarray}\label{DefPUEPML}
	\mathrm{PUPE}_{ML}(n) \triangleq \frac{1}{K_a(n)} \sum_{\ell = 1}^{K_a(n)} \ell \ \mathbb{P}\left[| \mathcal{S}\backslash \hat{\mathcal{S}}| = \ell \right].
\end{eqnarray}
According to Theorems \ref{THMRetention} and \ref{THMprePUEP}, since it is impossible to recover the transmitted set by searching only over $\hat{\mathcal{H}}$, we dedicate our analysis on $\mathrm{PUPE}_{ML}(n)$ only to $\{ \hat{\mathcal{H}}_{n}\}$. Moreover, with number of active users $K_a(n)=\beta n$ and codebook size $M_n = n^d , d>2$, the collision probability $p_0(n)$ (different active users choose the same message) 
\begin{eqnarray}
	p_0(n) \triangleq \mathbb{P}[\text{message collision}] \leq \frac{1}{M_n}{K_a(n) \choose 2} \!\leq \!\frac{\beta ^2 n^2}{2n^d} \rightarrow 0. \nonumber
\end{eqnarray}

\begin{thm}
	Consider the described setup with search space $\{ \hat{\mathcal{H}}_{n}\}$. Then, the error exponent for $\mathrm{PUPE}_{ML}(n)$ is
		\begin{eqnarray}
	 		\lim_{n\rightarrow \infty}-\frac{1}{n}\log 		\mathrm{PUPE}_{ML}(n) = \frac{P}{4}.
		\end{eqnarray}
\end{thm}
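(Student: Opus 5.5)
We will prove the two matching bounds on the exponent. The upper bound $\le P/4$ comes from a single-swap error event. The lower bound $\ge P/4$ comes from a union bound over $\ell$-swaps in which the $\ell=1$ term dominates.

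\emph{Step 1 (pairwise error).} Condition on the transmitted set $\mathcal{S}$, which lies in $\{\hat{\mathcal{H}}_n\}$ with probability tending to one by Theorems \ref{THMRetention} and \ref{H_nK_a}. Consider a competitor $\mathcal{S}'$ with $|\mathcal{S}\setminus\mathcal{S}'|=\ell$. ML prefers $\mathcal{S}'$ exactly when
\[
\|\mathbf{Z}-\mathbf{D}\|^2\le\|\mathbf{Z}\|^2, \qquad \mathbf{D}\triangleq\sum_{i\in\mathcal{S}'\setminus\mathcal{S}}\mbox{\boldmath $x$}_i-\sum_{i\in\mathcal{S}\setminus\mathcal{S}'}\mbox{\boldmath $x$}_i .
\]
Conditional on the codebook, this has probability $Q(\|\mathbf{D}\|/2)\le \exp(-\|\mathbf{D}\|^2/8)$. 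Since distinct spherical codewords are nearly orthogonal, $\|\mathbf{D}\|^2=2\ell nP(1+o(1))$ with high probability. The inner products concentrate at scale $\sqrt{n}\cdot P$, and a union bound over pairs with Gaussian-projection tail bounds (as in Theorem \ref{ThmConvToc}) controls them uniformly at the $\ell$ values that matter. The pairwise error is therefore $\exp\{-\ell nP/4(1+o(1))\}$.

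\emph{Step 2 (union bound for the lower bound on the exponent).} Write
\[
\mathbb{P}\bigl[|\mathcal{S}\setminus\hat{\mathcal{S}}|=\ell\bigr]\le \binom{K_a(n)}{\ell}\binom{M_n}{\ell}\,e^{-\ell nP(1+o(1))/4}\le \exp\{\ell(1+d)\log n+\ell\log\beta-\ell nP/4(1+o(1))\}.
\]
The polylogarithmic factor $n^{(1+d)\ell}$ is subexponential per unit $\ell$. Summing $\ell\,\mathbb{P}[\cdot]/K_a(n)$ over $\ell$ gives a geometric series dominated by $\ell=1$, so $\mathrm{PUPE}_{ML}(n)\le e^{-nP/4(1+o(1))}$.

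The exceptional events must also be added: collision ($p_0(n)$), $\mathcal{S}\not\subset\hat{\mathcal{H}}_n$, and atypical codebook geometry. These cannot be absorbed unless they are exponentially small at rate at least $P/4$. \textbf{This is the main obstacle.} Neither $p_0(n)\sim n^{2-d}$ nor the retention failure of Theorem \ref{THMRetention} is exponentially small. I would handle this by defining $\mathrm{PUPE}_{ML}$ conditionally on the pre-filtering stage having succeeded and on no collision, consistent with the paper's separation into $\mathrm{PUPE}_p$ and $\mathrm{PUPE}_{ML}$. The codebook-geometry deviations can be made exponentially small by Bernstein-type bounds on $\langle\mbox{\boldmath $s$}_i,\mbox{\boldmath $s$}_j\rangle$ with thresholds chosen so that their rate exceeds $P/4$.

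\emph{Step 3 (matching upper bound on the exponent).} For the converse $\liminf\ge$ fails-direction, note $\mathrm{PUPE}_{ML}(n)\ge \frac{1}{K_a(n)}\mathbb{P}[|\mathcal{S}\setminus\hat{\mathcal{S}}|=1]$. Lower bound this by the event that the nearest single-swap competitor beats $\mathcal{S}$. Fix one transmitted $i$ and one untransmitted $j$ inside $\hat{\mathcal{H}}_n$; such a $j$ exists by Theorem \ref{H_nK_a}. The event that $\mathbf{Z}$ projects onto $(\mbox{\boldmath $x$}_j-\mbox{\boldmath $x$}_i)/\|\cdot\|$ beyond $\|\mbox{\boldmath $x$}_j-\mbox{\boldmath $x$}_i\|/2$ has probability $Q(\sqrt{nP/2}(1+o(1)))=e^{-nP/4(1+o(1))}$. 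On this event the competitor has strictly smaller metric, so the true set is not ML-optimal and at least one user is in error. For a clean $\ell=1$ lower bound, either lower bound $\mathrm{PUPE}_{ML}(n)\ge\frac{1}{K_a(n)}\mathbb{P}[\hat{\mathcal{S}}\neq\mathcal{S}]$ or use the standard Gallager-type argument. The $1/K_a(n)=1/(\beta n)$ factor is subexponential. Combining Steps 2 and 3 gives the limit $P/4$.
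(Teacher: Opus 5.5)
\paragraph{The gap in Step 2.} Your Step 2 claims that the codebook-geometry deviations can be made exponentially negligible at a rate above $P/4$, while Step 1 still yields a pairwise error $e^{-nP/4(1+o(1))}$ on the complementary event. These two requirements are incompatible.

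Take $\ell=1$. Then $\|\mathbf{D}\|^2=2nP(1-\rho_{ij})$ with $\rho_{ij}=\langle s_i,s_j\rangle$. Its density is proportional to $(1-t^2)^{(n-3)/2}$ (cf.\ (\ref{DisNj})), so $\mathbb{P}[\rho_{ij}\ge\delta]=e^{\frac{n}{2}\log(1-\delta^2)+o(n)}$. The polynomially many pairs do not change this rate.

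This probability is below $e^{-nP/4}$ only if $\delta^2\ge 1-e^{-P/2}$, which is a fixed constant. On $\{\rho_{ij}<\delta\}$ your pairwise bound is then only $e^{-nP(1-\delta)/4}$.

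Optimizing this tradeoff is exactly the Laplace evaluation of $\mathbb{E}[Q(\sqrt{nP(1-\rho_{ij})/2})]$. Its exponent is
\[
E^\star=\min_{\rho\in(-1,1)}\Bigl\{\frac{P(1-\rho)}{4}-\frac{1}{2}\log(1-\rho^2)\Bigr\},
\]
attained at the $\rho^\star>0$ solving $\rho^\star/(1-(\rho^\star)^2)=P/4$. The bracket has derivative $-P/4$ at $\rho=0$, so $E^\star<P/4$ (roughly $P/4-P^2/32$ for small $P$).

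\paragraph{This is not slack in your bounds.} Run your own Step 3 argument on the event that some unsent $j$ has $\rho_{ij}\approx\rho^\star$ with a sent $i$.
\begin{itemize}
\item Such a $j$ lies in $\hat{\mathcal{H}}_n$ with probability tending to one, because its normalized score is $O_P(1)$ while $\sqrt{n}\tau_n=-a_n\to-\infty$.
\item The single swap then beats $\mathcal{S}$ with conditional probability $e^{-nP(1-\rho^\star)/4+o(n)}$.
\item Conditioning on pre-filter success does not disturb this.
\end{itemize}
Hence the ensemble-averaged $\mathrm{PUPE}_{ML}(n)\ge\frac{1}{\beta n}e^{-nE^\star-o(n)}$, and $-\frac{1}{n}\log\mathrm{PUPE}_{ML}(n)$ cannot converge to $P/4$. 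The average is dominated by rare codebooks containing a strongly correlated sent/unsent pair, not by typical geometry.

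\paragraph{Comparison with the paper.} The paper's proof has the same hole. In (\ref{ERight})--(\ref{FinE}) it passes from $\mathbb{P}[\mathcal{D}_n^c]\to 0$ to $\mathbb{E}[e^{-\|\Delta_{\ell,n}\|^2/8}]=e^{-\ell Pn/4+o(n)}$. But $\mathbb{P}[\mathcal{D}_n^c]\to0$ only makes the atypical part $o(1)$, not exponentially small. Its converse $\max_\ell E_\ell\le\mathrm{PUPE}_{ML}(n)$ also bounds the error from below by terms of an upper bound. Your Step 3 is therefore the sounder converse. Your remark that pre-filter misses are not exponentially small is correct and is not addressed in the paper.

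\paragraph{What would be needed.} To get $P/4$ you must change the quantity, as you already did for pre-filtering. For instance, bound the error probability conditionally on a typical codebook (all $|\rho_{ij}|=O(\sqrt{\log n/n})$, an event of probability tending to one). There the $\ell=1$ exponent really is $P/4$, though larger $\ell$ still need separate control. For the plain ensemble average, the exponent is at most $E^\star$.
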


\begin{proof}
	Fix $n$. The error occurs if decoder during ML step chooses $\mathcal{S}^{\prime}$ with $| \mathcal{S}^{\prime}| = n\beta$ over the true set $\mathcal{S}$. Hence,
	\begin{eqnarray}
		\left\| \mbox{\boldmath $Y$}-\sum_{i\in \mathcal{S}^{\prime}}\mbox{\boldmath $x$}_{i} \right\| ^2 \leq \left\| \mbox{\boldmath $Y$} - \sum_{i\in \mathcal{S}}\mbox{\boldmath $x$}_{i}\right\| ^2. \label{S_primeS}
	\end{eqnarray}
	Let $| \mathcal{S}\backslash \mathcal{S}^{\prime}| = \ell$ and define 
	\begin{eqnarray}
		\Delta_{\ell,n} \triangleq \sum_{i\in \mathcal{S}}\mbox{\boldmath $x$}_{i} - \sum_{j\in \mathcal{S}^{\prime}}\mbox{\boldmath $x$}_{j}. \label{Delta}
	\end{eqnarray}
	Then we can simplify (\ref{S_primeS}) to 
	\begin{eqnarray}
		\langle \mbox{\boldmath $Z$}, \Delta_{\ell ,n}\rangle \leq -\frac{\| \Delta_{\ell ,n}\| ^2}{2}.
	\end{eqnarray}
	Since $\mbox{\boldmath $Z$} \sim \mathcal{N}(\mbox{\boldmath $0$}, \mathbf{I}_{n})$, then $\langle \mbox{\boldmath $Z$}, \Delta_{\ell ,n}\rangle \sim \mathcal{N}(0, \| \Delta_{\ell,n}\|^2)$. We can simply write the probability of pairwise error, conditioned on $\Delta_{\ell,n}$ as 
	\begin{eqnarray}
		\mathbb{P}\left[\mathcal{S} \rightarrow \mathcal{S}^{\prime} \middle \vert \Delta_{\ell ,n}\right] = Q\left(\frac{\| \Delta_{\ell,n}\|}{2}\right) \stackrel{(a)}{\leq} e^{-\frac{\| \Delta_{\ell,n}\|^2}{8}}, \label{UppQ}
	\end{eqnarray}
	where $(a)$ is by Chernoff bound on $Q$-function.
	
	For an inactive user $\mbox{\boldmath $x$}_{j} \notin 	\mathcal{S}$, we already discussed in the proof of Theorem \ref{H_nK_a} that $\langle \mbox{\boldmath $x$}_{j}, \hat{\mbox{\boldmath $u$}}\rangle \xrightarrow{d} \mathcal{N}(0,P)$. So, the probability that an unsent codeword falls inside the search space is $p_n$ with limit  $1/2$ as in (\ref{pnlimit1}). Now, we can model the presence of an unsent codewrod in $\{ \hat{\mathcal{H}}_{n}\}$ by a Bernoulli random variable with probability $1/2$ as $n\rightarrow \infty$. Hence, by WLLN, the number of unsent codewords in $\{ \hat{\mathcal{H}}_{n}\}$ concentrates tightly around $(M_n-K_a(n))/2 = (n^d - \beta n) /2$. 
	
	To find the total probability of an $\ell$-fold error $\mathbb{P}\left[ |\mathcal{S}\backslash \mathcal{S}^{\prime}| = \ell \right]$, we account for every possible way that decoder can construct $\mathcal{S}^{\prime}$ differing $\mathcal{S}$ by exactly $\ell$ codewords. It requires choosing $\ell$ sent codewords to drop from $\mathcal{S}$ and $\ell$ unsent codewords to pick from the unsent codewords in $\{ \hat{\mathcal{H}}_{n}\}$. Hence, by union bound and (\ref{UppQ}), it yields
	\begin{eqnarray}
		\mathbb{P}\left[| \mathcal{S}\backslash \hat{\mathcal{S}}| = \ell \right] \leq {\beta n \choose \ell}{(n^d-\beta n)/2 \choose \ell} \mathbb{E}\left[e^{-\frac{\| \Delta_{\ell,n}\|^2}{8}}\right]. \label{ell}
	\end{eqnarray}
	The vector $\Delta_{\ell,n}$ is the sum of $\ell$ sent codewords and the negative of $\ell$ unsent codewords. By expanding the squared norm, we obtain
	\begin{eqnarray}
		\| \Delta_{\ell,n}\|^2 = 2\ell nP + \sum_{i\neq j}\langle \mbox{\boldmath $x$}_{i}, \mbox{\boldmath $x$}_{j}\rangle. \label{ExpansionDelta}
	\end{eqnarray}
	 As $n\rightarrow \infty$, any two independent vectors on $\mathbb{S}^{n-1}$ are asymptotically orthogonal \cite{Gorban2018}, \cite{Cai2013}. Formally, for our problem where the sphere radius is $\sqrt{nP}$, we have $\frac{\langle \mbox{\boldmath $x$}_{i}, \mbox{\boldmath $x$}_{j}\rangle}{nP} \xrightarrow{P} 0$. 
	 By (\ref{ExpansionDelta}), we get
	 \begin{eqnarray}
	 	\frac{\| \Delta_{\ell,n}\|^2}{n} \xrightarrow{P} 2\ell P,
	 \end{eqnarray}
	 which results into $\mathbb{P}\left[\left | \| \Delta_{\ell,n}\|^2 - 2\ell nP\right | \geq \epsilon n\right] \rightarrow 0$ for all $\epsilon >0$ according to the definition of convergence in probability. Defining $\mathcal{D}_n \triangleq \{ \left | \| \Delta_{\ell,n}\|^2 - 2\ell nP\right | \leq \epsilon n \}$, we split
	 \begin{eqnarray}
	 	\mathbb{E}\left[e^{-\frac{\| \Delta_{\ell,n}\|^2}{8}}\right]\! = \! \mathbb{E}\left[e^{-\frac{\| \Delta_{\ell,n}\|^2}{8}} \mathbf{1}_{\mathcal{D}_{n}}\right] \! + \! \mathbb{E}\left[e^{-\frac{\| \Delta_{\ell,n}\|^2}{8}} \mathbf{1}_{\mathcal{D}_{n}^{c}}\right].
	 \end{eqnarray}
	 On $\mathcal{D}_{n}$, we have
	 \begin{eqnarray}
	 	e^{-\frac{1}{8}(2\ell P +\epsilon)n} \leq e^{\frac{-\| \Delta_{\ell ,n}\| ^2}{8}} \leq e^{-\frac{1}{8}(2\ell P - \epsilon)n}.
	 \end{eqnarray}
	 Thus,
	\begin{eqnarray}
		e^{-\frac{1}{8}(2\ell P +\epsilon)n} \mathbb{P}[\mathcal{D}_{n}] \leq \mathbb{E}\left[e^{-\frac{\| \Delta_{\ell,n}\|^2}{8}} \mathbf{1}_{\mathcal{D}_{n}}\right] \leq e^{-\frac{1}{8}(2\ell P - \epsilon)n}.
	\end{eqnarray}
	Since $\mathbb{P}[\mathcal{D}_{n}] \rightarrow 1$, then 
	\begin{eqnarray}
		\mathbb{E}\left[e^{-\frac{\| \Delta_{\ell,n}\|^2}{8}} \mathbf{1}_{\mathcal{D}_{n}}\right] = e^{-\frac{1}{4}\ell Pn + o(n)}. \label{ELeft}
	\end{eqnarray}
	Because $\| \Delta_{\ell,n}\|^2 \geq 0$, then $0\leq e^{-\frac{1}{8}\| \Delta_{\ell,n}\|^2} \leq 1$. Therefore,
	\begin{eqnarray}
		0\leq \mathbb{E}\left[e^{-\frac{\| \Delta_{\ell,n}\|^2}{8}} \mathbf{1}_{\mathcal{D}_{n}^{c}}\right] \leq \mathbb{P}[\mathcal{D}_{n}^{c}] \rightarrow 0.\label{ERight}
	\end{eqnarray}
	Combining the results in (\ref{ELeft}) and (\ref{ERight}), for sufficiently large $n$, we have
	\begin{eqnarray}
		\mathbb{E}\left[e^{-\frac{\| \Delta_{\ell,n}\|^2}{8}}\right] = e^{-\frac{1}{4}\ell Pn + o(n)}.\label{FinE}
	\end{eqnarray}
	Now, substituting (\ref{FinE}) into (\ref{ell}) and according to (\ref{DefPUEPML}), we get
	\begin{eqnarray}
		&& \mathrm{PUPE}_{ML}(n) \leq \\
		&&\frac{1}{n\beta} \sum_{\ell = 1}^{n\beta} \ell {\beta n \choose \ell}{(n^d-\beta n)/2 \choose \ell} e^{-\frac{1}{4}\ell nP + o(n)} \nonumber\\
		&& \stackrel{(c)}{=}\sum_{\ell =1}^{\beta n} \underbrace{{\beta n-1 \choose \ell -1}{(n^d-\beta n)/2	\choose \ell}	e^{-\frac{1}{4}\ell nP +o(n)}}_{\triangleq E_\ell}	\nonumber,
	\end{eqnarray}
	where $(c)$ follows from $\ell{\beta n \choose \ell} = n \beta {\beta n-1 \choose \ell-1}$. It is clear that
	\begin{eqnarray}
		\max_{\ell}E_\ell \leq \mathrm{PUPE}_{ML}(n) \leq \beta n \max_{\ell}E_{\ell},
	\end{eqnarray}
	resulting into
	\begin{eqnarray}
		\frac{-1}{n}(\log \beta n \max_{\ell}E_\ell)\! \leq\!  \frac{-1}{n}\log \mathrm{PUPE}_{ML}(n)\! \leq \! \frac{-1}{n}\log \max_{\ell}E_{\ell}\nonumber.
	\end{eqnarray}
	Since $\log(\beta n) / n \rightarrow 0$, both sides converge to the same value, and hence
	\begin{eqnarray}
		\lim_{n\rightarrow \infty}\frac{-1}{n}\log \mathrm{PUPE}_{ML}(n)=\lim_{n\rightarrow \infty}\frac{-1}{n}\log \max_{\ell}E_{\ell}. \label{48}
	\end{eqnarray}
	We examine the ratio of successive terms as the following
	\begin{eqnarray}
		&&\!\!\!\! \frac{E_{\ell +1}}{E_{\ell}} = \frac{\beta n -\ell}{\ell (\ell +1)}\left(\frac{n^d-\beta n}{2}-\ell \right)e^{-\frac{nP}{4}}\\
		&&\ \ \ \ \ \ \approx \left(\frac{\beta n -\ell}{2\ell (\ell +1)}\right)n^d e^{-\frac{nP}{4}}.\label{FUC}
	\end{eqnarray}
	The ratio in (\ref{FUC}) goes to $0$ as $n\rightarrow \infty$, and hence for all sufficiently large $n$, the ratio ${E_{\ell +1}}/{E_{\ell}} <1$, which results into $\max_{\ell}E_{\ell} = E_1$. Finally 
\begin{eqnarray}
	\frac{-1}{n}\log E_1 = \frac{P}{4}-\frac{o(n)}{n}-\frac{1}{n}\log \left(\frac{n^d - \beta n}{2}\right)\rightarrow \frac{P}{4}, \label{51}
\end{eqnarray}
as $n \rightarrow \infty$. Comparing (\ref{48}) and (\ref{51}) completes the proof.
\end{proof}

\section{Conclusion}

We analyzed subset recovery decoding in the Gaussian many-access channel with a spherical codebook of size $n^d, d>2$ whose codewords are drawn uniformly at random from $\mathbb{S}^{n-1}(\sqrt{nP})$. We prove that, for $0<\beta<2$, the randomly drawn $K_a(n)$-subset of points on sphere is hemispherical with high probability for large $n$. We further show that reliable decoding limited the range of $\beta$ to $0<\beta<1/4$, which is in the hemispherical concentration range. Leveraging this geometry, for reliable decoding, it suffices to search over the sequence of spherical caps $\{ \hat{\mathcal{H}}_{n}\}$ converging to hemisphere $\hat{\mathcal{H}}$ in Hausdorff distance, rather than over the entire sphere. By applying the ML estimator to this reduced candidate set, the decoder can recover the transmitted subset by per-user probability error goes exponentially to zero with decay rate $P/4$. As shown, Wendel's theorem gives a range of $\beta$ that is too coarse. A more precise relation between $\beta$, especially for $\beta<1/4$, and the spherical-cap angle of the reduced search space is a direction for future research.

\appendices

\section{Proof of Theorem \ref{K_aHemispherical}} \label{ProofK_aHemispherical}

\begin{proof}
We prove the two directions separately.

\medskip
\noindent\textbf{($\Leftarrow$)} Assume that $1<\beta<2$. We show $p_{n,K_a(n)}$ converges to $1$ exponentially.
By Wendel's theorem,
\begin{equation}\label{eq:wendel}
	p_{n,K_a(n)}= \frac{1}{2^{K_a(n)-1}}
\sum_{i=0}^{n-1}\binom{K_a(n)-1}{i}.
\end{equation}
Equivalently, if
\begin{eqnarray}
	B_n \sim \mathrm{Bin}\!\left(K_a(n)-1,\frac12\right),
\end{eqnarray}
then
\begin{eqnarray}
	p_{n,K_a(n)}=\mathbb{P}[B_n\le n-1],
\end{eqnarray}
resulting into
\begin{eqnarray}
	1-p_{n,K_a(n)}=\mathbb{P}[B_n\ge n],
\end{eqnarray}
which constitutes an upper-tail binomial event strictly above the expected value $\mathbb{E}[B_n]= (K_a(n)-1)/2$, because $1<\beta<2$ and the normalized threshold satisfies
\begin{eqnarray}
	\frac{n}{K_a(n)-1}\to \frac{1}{\beta}\in\left(\frac12,1\right), \qquad \mathrm{as} \ n \rightarrow \infty. \label{sig3}
\end{eqnarray}
We know that $B_n$ is the summation of $K_a(n)-1$ i.i.d. $\mathrm{Bern}(1/2)$ random variables.
Since threshold $n$ scales linearly with the number of trials $K_a(n)-1$ and requires a macroscopic deviation from the mean, the probability of this event is governed by the large deviation principle \cite{Arratia1989, Dubhashi2009, Dembo1998}. More precisely, by Cram\'er's theorem on large deviation
\begin{eqnarray}
	 && \lim_{K_a(n) \rightarrow \infty} \frac{1}{K_a(n)-1} \log \mathbb{P}\left[\frac{B_n}{K_a(n)-1} \! \geq \! \frac{n}{K_a(n)-1}\right]\! \nonumber \\
	  && \ \ \ \ = -\inf_{x\geq \frac{1}{\beta}} I(x),\nonumber
\end{eqnarray}
where for $\mathrm{Bern}(1/2)$, 
\begin{eqnarray}
	I(x) = D(x\|1/2)=x\log(2x)+(1-x)\log \bigl(2(1-x)\bigr)
\end{eqnarray}
is the binary relative entropy. Since $1/2 < 1/\beta <1$, this $I(x)$ is increasing for $x>1/2$, so
\begin{eqnarray}
	\inf_{x \geq \frac{1}{\beta}} I(x) = I\left(\frac{1}{\beta}\right).
\end{eqnarray}
Hence,
\begin{eqnarray}
	&& \!\!\!\!\!\!\!\!\!\!\!\!\!\!\!\!\!\!\!\!\!  \mathbb{P}[B_n\ge n]= \nonumber\\
	&& \!\!\!\!\!\!\!\!\!\!\!\!\!\!\!\!\!\!\!\!\! \exp\!\left\{
-(K_a(n)-1)\,D\!\left(\frac{n}{K_a(n)-1}\bigg\|\frac12\right)
+o(K_a(n))
\right\}. \label{sigg3}
\end{eqnarray}
Note that Cram\'er's theorem applies to fixed thresholds, but because the rate function is continuous, the same asymptotic holds for $n/(K_a(n)-1) \rightarrow 1/\beta$.

Finally, by (\ref{sigg3})
and this fact that $D(x\|1/2)$ is continuous and strictly positive for $x\neq 1/2$, we obtain
\begin{eqnarray}
	\lim_{n\to\infty}
-\frac{1}{n}\log\bigl(1-p_{n,K_a(n)}\bigr)
=
\beta\,D\!\left(\frac{1}{\beta}\bigg\|\frac12\right)
>0.
\end{eqnarray}
Hence there exists a constant $c>0$ such that
\begin{eqnarray}
	1-p_{n,K_a(n)}\le e^{-cn}
\end{eqnarray}
for all sufficiently large $n$. Therefore $p_{n,K_a(n)}\to 1$ exponentially.

\medskip
\noindent\textbf{($\Rightarrow$)} Assume that $p_{n,K_a(n)}$ converges to $1$ exponentially as $n\to\infty$. We show that necessarily $1<\beta<2$.

We first exclude $\beta\ge 2$. If $\beta>2$, then
\begin{eqnarray}
	\frac{n-1}{K_a(n)-1}\to \frac{1}{\beta}<\frac12, \qquad \mathrm{as} \ n \rightarrow \infty. \label{betaBigger2}
\end{eqnarray}
We know that $B_n$ is the summation of $K_a(n)-1$ i.i.d. Bernoulli random variables with probability $1/2$. Hence, by Weak Law of Large Numbers (WLLN), we have
\begin{eqnarray}
	\frac{B_n}{K_a(n)-1}\xrightarrow{P} \frac{1}{2},
\end{eqnarray}
which for every $\epsilon >0$, results into
\begin{eqnarray}
	\mathbb{P}\left[\left \vert \frac{B_n}{K_a(n)-1} -\frac{1}{2}\right \vert > \epsilon\right] \rightarrow 0,
\end{eqnarray}  
as $n \rightarrow \infty$. Now by the basic definition of limit for (\ref{betaBigger2}), it yields that there exists $\epsilon^\prime>0$ such that
\begin{eqnarray}
	\frac{1}{\beta}-\epsilon^\prime <\frac{n-1}{K_a(n)-1} < \frac{1}{\beta}+\epsilon^\prime < \frac{1}{2}+\epsilon^{\prime} , \label{epprime}
\end{eqnarray}
for large enough $n$. Setting $\epsilon = \epsilon^\prime$ and by (\ref{epprime}), we have the following chain
\begin{eqnarray}
	&&\!\!\!\!\!\!\!\!\! p_{n,K_a(n)}=\mathbb{P}[B_n \leq n-1] = \mathbb{P}\left[\frac{B_n}{K_a(n)-1} \leq \frac{n-1}{K_a(n)-1}\right]\nonumber\\
	&& \ \ \ \ \ \ \ \ \ \ \ \ \ \ \ \ \ \ \ \ \ \ \ \  < \mathbb{P}\left[\frac{B_n}{K_a(n)-1} < \frac{1}{2}+\epsilon\right] \nonumber \\
	&& \ \ \ \ \ \ \ \ \ \ \ \ \ \ \ \ \ \ \ \ \ \ \ \   < \mathbb{P}\left[\left \vert \frac{B_n}{K_a(n)-1}-\frac{1}{2}\right\vert >\epsilon \right] \rightarrow 0. \nonumber
\end{eqnarray}
Hence $p_{n,K_a(n)}$ cannot converge to $1$.

If $\beta=2$, then
\begin{eqnarray}
	\frac{n}{K_a(n)-1}\to \frac12.
\end{eqnarray}
Let $\mu_n \triangleq \mathbb{E}[B_n] = (K_a(n)-1)/2$ and $\sigma_n^2 \triangleq \mathrm{Var}(B_n)= (K_a(n)-1)/4$. By the central limit theorem (CLT), we get
\begin{eqnarray}
	\frac{B_n-\mu_n}{\sigma_n} \xrightarrow{d} \mathcal{N}(0,1).
\end{eqnarray}
Rewriting the probability
\begin{eqnarray}
	\mathbb{P}[B_n \leq n-1]=\mathbb{P}\left[\frac{B_n-\mu_n}{\sigma_n} \leq \frac{n-1-\mu_n}{\sigma_n}\right]. \label{sig1}
\end{eqnarray}
Since $\beta =2$ and $K_a(n)=\beta n$, then 
\begin{eqnarray}
	\frac{n-1-\mu_n}{\sigma_n} \rightarrow 0, \qquad \mathrm{as}\ n\rightarrow \infty. \label{sig2}
\end{eqnarray}
According to (\ref{sig1}) and (\ref{sig2}), by continuity of standard normal CDF at $0$, we have
\begin{eqnarray}
	p_{n,K_a(n)}=\mathbb{P}[B_n\leq n-1] \rightarrow \frac{1}{2}.
\end{eqnarray}
So, again $p_{n,K_a(n)}$ does not converge to $1$.

Therefore, exponential convergence to $1$ is impossible when $\beta\ge 2$.

Now consider $0<\beta\le 1$. Then for all sufficiently large $n$, we have
\begin{eqnarray}
	K_a(n)-1\le n-1,
\end{eqnarray}

and the sum in \eqref{eq:wendel} contains all binomial coefficients. Consequently, $p_{n,K_a(n)}=1$ for all $n$.
\end{proof}

\section{Proof of Theorem \ref{ThmConvToc}} \label{ProofThmconvToc}

\begin{proof}
	Let $\mbox{\boldmath $s$}_{i} = \mbox{\boldmath $x$}_{i}/\sqrt{nP}$, where $\mbox{\boldmath $x$}_{i}$ is a codeword (point) uniformly distributed on $\mathbb{S}^{n-1}(\sqrt{nP})$, and $t_i = \langle \mbox{\boldmath $s$}_{i}, \mbox{\boldmath $u$}  \rangle$. We denote the set of vectors that are orthogonal to $\mbox{\boldmath $u$}$ by $\mathcal{U}^{\perp}$. We then decompose each point $\mbox{\boldmath $s$}_{i}$ as
	\begin{eqnarray}
		\mbox{\boldmath $s$}_{i} = t_i \mbox{\boldmath $u$} + \mbox{\boldmath $q$}_{i},\label{Decompose}
	\end{eqnarray}
	where $\mbox{\boldmath $q$}_{i} \in \mathcal{U}^{\perp}$. 
	
	 Let $\mathcal{S}$ be the actual set of codewords that we know only contains codewords from one hemisphere. We rewrite the summation of codewords as
	\begin{eqnarray}
		\mbox{\boldmath $S$} = \sum_{i\in \mathcal{S}}\mbox{\boldmath $x$}_{i} && \!\!\!\!\!\!\!\!\! = \sqrt{nP} \sum_{i\in \mathcal{S}} \mbox{\boldmath $s$}_{i}\label{defS}\\
		&&\!\!\!\!\!\!\!\!\! = \sqrt{nP} \sum_{i\in \mathcal{S}}(t_i \mbox{\boldmath $u$}_{i}+\mbox{\boldmath $q$}_{i})\\
		&& \!\!\!\!\!\!\!\!\! = \underbrace{\left(\sqrt{nP} \sum_{i \in \mathcal{S}}{t}_{i}\right)}_{\triangleq  S_{\parallel}} \mbox{\boldmath $u$}  + \underbrace{\sqrt{nP} \sum_{i\in \mathcal{S}}\mbox{\boldmath $q$}_{i}}_{\triangleq \mbox{\scriptsize \boldmath $S$}_{\perp}}, \label{subcompo}
	\end{eqnarray}
	where (\ref{subcompo}) follows from substituting (\ref{Decompose}) into (\ref{defS}). 
	
	We first focus on parallel component $S_{\parallel}$. Without loss of generality, let $\mathcal{S} = [K_a(n)]$. By Weak Law of Large Numbers (WLLN) as $K_a(n) \rightarrow \infty$, we obtain
	\begin{eqnarray}
		\frac{1}{K_a(n)} \sum_{i=1}^{K_a(n)} t_i \xrightarrow{P} \mu.\label{WLLN1}
	\end{eqnarray}
	Note that throughout this proof, we assume that $\mbox{\boldmath $u$}$ is known and fixed. Therefore, given that $\mbox{\boldmath $s$}_{i}$ are distributed i.i.d. on sphere, the projections $t_i$ are i.i.d. as well, and  we are allowed to use WLLN.
	
	Here, our first goal is to find $\mu$. To this end, we initially introduce the Gaussian Projection Approximation lemma.
	\begin{lemma}(Gaussian Projection Approximation)\label{GPALemma}
		For a fixed unit vector $\mbox{\boldmath $u$}$, and uniformly distributed vector $\mbox{\boldmath $s$}$ on $\mathbb{S}^{n-1}$, let $t= \langle \mbox{\boldmath $u$}, \mbox{\boldmath $s$}\rangle$. Then, we have
		\begin{eqnarray}
			\sqrt{n}t \xrightarrow{d} \mathcal{N}(0,1),
		\end{eqnarray}
		as $n\rightarrow \infty$.
	\end{lemma}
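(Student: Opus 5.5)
We represent the uniform point on the sphere through a Gaussian vector and then apply the law of large numbers together with Slutsky's lemma. Since the law of $\mbox{\boldmath $s$}$ is rotation invariant, we may rotate coordinates so that $\mbox{\boldmath $u$}=\mbox{\boldmath $e$}_{1}$; this does not change the distribution of $t=\langle \mbox{\boldmath $u$},\mbox{\boldmath $s$}\rangle$. Following the codebook construction in the system model, we write $\mbox{\boldmath $s$}=\mbox{\boldmath $X$}/\|\mbox{\boldmath $X$}\|$ with $\mbox{\boldmath $X$}=(X_1,\dots,X_n)\sim\mathcal{N}(\mathbf{0},\mathbf{I}_n)$. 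Then
\begin{eqnarray}
\sqrt{n}\,t=\frac{X_1}{\sqrt{\frac{1}{n}\sum_{k=1}^{n}X_k^2}}. \nonumber
\end{eqnarray}

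The argument has three steps.
\begin{enumerate}
\item The numerator $X_1$ is exactly $\mathcal{N}(0,1)$ for every $n$.
\item The $X_k^2$ are i.i.d. with mean $1$, so by the WLLN $\frac{1}{n}\sum_{k=1}^{n}X_k^2\xrightarrow{P}1$. By the continuous mapping theorem, the denominator then converges in probability to $1$.
\item Slutsky's lemma combines these two facts and gives $\sqrt{n}\,t\xrightarrow{d}\mathcal{N}(0,1)$.
\end{enumerate}

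There is one subtlety: the numerator and denominator are dependent, since $X_1^2$ appears in the sum. This causes no difficulty, because Slutsky's lemma only requires the denominator to converge in probability to a constant. If one wants to avoid the dependence altogether, write the denominator as $\sqrt{X_1^2/n+\frac{1}{n}\sum_{k\ge 2}X_k^2}$; here $X_1^2/n\to 0$ in probability, and the remaining sum is independent of $X_1$.

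None of these steps is a real obstacle; the only point needing care is the justification of rotation invariance. As an alternative route, one could use the exact density of $t$, which is proportional to $(1-t^2)^{(n-3)/2}$ on $[-1,1]$. Substituting $t=x/\sqrt{n}$ gives $(1-x^2/n)^{(n-3)/2}\to e^{-x^2/2}$ pointwise. Scheffé's lemma then yields convergence in total variation, which is stronger than the convergence in distribution required here.
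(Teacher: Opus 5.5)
Your proof is correct and is essentially the argument the paper relies on: the lemma itself is stated without a separate proof, but in Appendix C the paper proves the analogous two-coordinate statement the same way you do, by rotating to $\mbox{\boldmath $u$}=\mbox{\boldmath $e$}_{1}$, writing $\mbox{\boldmath $s$}=\mbox{\boldmath $P$}/\|\mbox{\boldmath $P$}\|$ with $\mbox{\boldmath $P$}\sim\mathcal{N}(\mathbf{0},\mathbf{I}_n)$, applying the WLLN to $\|\mbox{\boldmath $P$}\|^2/n$, and invoking Slutsky's lemma. Your alternative density route (the paper uses the same density $(1-t^2)^{(n-3)/2}$ in Appendix D) also works, but Scheffé's lemma needs convergence of the full densities, so you must also check that the normalizing constant of $\sqrt{n}\,t$, namely $\Gamma(n/2)/\bigl(\sqrt{\pi n}\,\Gamma((n-1)/2)\bigr)$, tends to $1/\sqrt{2\pi}$, which you left implicit.
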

	We know that $K_a(n)$ points are chosen from one hemisphere, so their projection $t_i$ on the true axis of hemisphere $\mbox{\boldmath $u$}$ cannot be negative. Hence, by Lemma~\ref{GPALemma}, it yields
	\begin{eqnarray}
		\mu  = \mathbb{E}[t | t>0] = \sqrt{\frac{2}{\pi n}}.\label{mu1}
	\end{eqnarray}
	By combining the results from (\ref{WLLN1}) and (\ref{mu1}), we have
	\begin{eqnarray}
		\frac{S_{\parallel}}{n}= \frac{K_a(n) \sqrt{nP} \frac{1}{K_a(n)}\sum_{i=1}^{K_a(n)}t_i}{n} \xrightarrow{P} \beta \sqrt{\frac{2P}{\pi}}.\label{SII}
	\end{eqnarray}
	
	Similarly, since we are limited to a hemisphere, as $K_a(n) \rightarrow \infty$, by WLLN, we get
	\begin{eqnarray}
		\frac{1}{K_a(n)} \sum_{i=1}^{K_a(n)} \| \mbox{\boldmath $q$}_{i} \| ^2 \xrightarrow{P}  \mathbb{E}[ \| \mbox{\boldmath $q$}  \| ^2 | t > 0].\label{WLLN2}
	\end{eqnarray}
	 For a perpendicular vector $\mbox{\boldmath $q$}$ in $\mathcal{U}^{\perp}$, condition on $t$, we have
	\begin{eqnarray}
		&& \mathbb{E}[\| \mbox{\boldmath $q$}\| ^2 | t] = \mathbb{E}[\| \mbox{\boldmath $s$} - t\mbox{\boldmath $u$}\| ^2 | t] \nonumber\\
		&& = \mathbb{E}[\| \mbox{\boldmath $s$}\| ^2+ t^2 \| \mbox{\boldmath $u$}\| ^2 - 2 \langle  \mbox{\boldmath $s$}, t\mbox{\boldmath $u$}\rangle | t]\\
		&& = 1 +t^2 - 2t^2 = 1-t^2, \label{1-t2}
	\end{eqnarray}
	and by the result of Lemma~\ref{GPALemma}, we obtain 
	\begin{eqnarray}
		\mathbb{E}[ \| \mbox{\boldmath $q$}  \| ^2 | t > 0] = 1 - \mathbb{E}[t^2 | t>0]= 1 - \frac{1}{2n} \rightarrow 1,\label{Expectperp}
	\end{eqnarray}
	as $n \rightarrow \infty$.
	For the perpendicular component $\mbox{\boldmath  $S$}_{\perp}$, we  write
	\begin{eqnarray}
		&& \!\!\!\!\!\!\!\! \frac{\| \mbox{\boldmath $S$}_{\perp}\| ^  2}{n^2}  = \frac{nP}{n^2} \left \| \sum_{i=1}^{K_a(n)} \mbox{\boldmath $q$}_{i}\right \|  ^ 2 = \nonumber\\
		&&\!\!\!\!\!\!\!\! \frac{P}{n}K_a(n) \frac{1}{K_a(n)}\sum_{i=1}^{K_a(n)}\| \mbox{\boldmath $q$}_{i}\| ^2 + \frac{2P}{n} \!\!\! \sum_{i<j\in [K_a(n)]}\!\!\!\!\!\!\!\! \langle \mbox{\boldmath $q$}_{i}, \mbox{\boldmath $q$}_{j} \rangle. \label{Perpsquared}
	\end{eqnarray}
	According to (\ref{WLLN2}) and (\ref{Expectperp}), we know that the first term in (\ref{Perpsquared}) converges in probability to $P\beta$ as $K_a(n), n \rightarrow \infty$. For a perpendicular vector $\mbox{\boldmath $q$}_{i}$, condition on $t_i$, the unit vectors $\mbox{\boldmath $q$}_{i}/\| \mbox{\boldmath $q$}_{i}\|$ for $i\in [K_a(n)]$ are i.i.d. uniformly distributed on the unit sphere in $\mathcal{U}^{\perp}$. As proved in  \cite{Gorban2018}, \cite{Cai2013}, for any pairs of uniformly distributed vectors on sphere, we have $\langle \mbox{\boldmath $q$}_{i}, \mbox{\boldmath $q$}_{j}\rangle/n \xrightarrow{P} 0$. Hence, the second term in (\ref{Perpsquared}) converges to $0$ in probability. Combining the results, we have
	\begin{eqnarray}
		\frac{\| \mbox{\boldmath $S$}_{\perp}\| ^  2}{n^2} \xrightarrow{P} P\beta. \label{SI_}
	\end{eqnarray}
	Substitute the decomposition version of $\mbox{\boldmath $S$}$ into channel output
	\begin{eqnarray}
		\mbox{\boldmath $Y$} = \mbox{\boldmath $S$}+ \mbox{\boldmath $Z$} = S_{\parallel} \mbox{\boldmath $u$} + \mbox{\boldmath $S$}_{\perp} + \mbox{\boldmath $Z$}, \label{DecompY}
	\end{eqnarray}
	from which
	\begin{eqnarray}\label{Y}
		&& \!\!\!\! \frac{\| \mbox{\boldmath $Y$} \| ^ 2 }{n^2}= \frac{S_{\parallel}^2}{n^2} + \frac{\left \| \mbox{\boldmath $S$}_{\perp} \right \| ^2}{n^2} + \frac{\| \mbox{\boldmath $Z$}\|^2}{n^2} + \frac{2}{n^2} S_{\parallel}\langle \mbox{\boldmath $Z$}, \mbox{\boldmath $u$}\rangle \\
		&& \ \ \ \ \ \ \ \ \ \ + \frac{2}{n^2} \langle \mbox{\boldmath $S$}_{\perp},\mbox{\boldmath $Z$}\rangle + \frac{2S_{\parallel}}{n^2} \underbrace{\langle \mbox{\boldmath $u$}, \mbox{\boldmath $S$}_{\perp}\rangle}_{=0}.
	\end{eqnarray}
	We proved that the first term $S_{\parallel}^2/n^2$ converges in probability to $2P\beta^2/\pi$, and the second term $\left \| \mbox{\boldmath $S$}_{\perp} \right \| ^2/n^2$ converges in probability to $P\beta$. It remains to discuss the remaining terms in (\ref{Y}). By WLLN for i.i.d. $\chi_1^2$ (Chi-squared) distributed random variables $Z_i^2$ for $i\in [n]$, we have
	\begin{eqnarray}
		\frac{\| \mbox{\boldmath $Z$}\| ^2}{n} \xrightarrow{P} 1, \label{|Z|^2}
	\end{eqnarray}
			which results into
	\begin{eqnarray}
		\frac{\| \mbox{\boldmath $Z$}\| ^2}{n^2} \xrightarrow{P} 0.
	\end{eqnarray}
	Given that $\| \mbox{\boldmath $u$} \| =1$ and $\mbox{\boldmath $Z$} \sim \mathcal{N}(0,\mathbf{I}_{n})$, the inner product $\langle \mbox{\boldmath $Z$}, \mbox{\boldmath $u$}\rangle$ has standard normal distribution $\mathcal{N}(0,1)$. Hence, as $n\rightarrow \infty$, we have
	\begin{eqnarray}
		\mathbb{P}\left[\left | \frac{\langle \mbox{\boldmath $Z$}, \mbox{\boldmath $u$}\rangle}{n}\right | > \epsilon \right] < \frac{1}{n^2 \epsilon^2} \rightarrow 0, \label{ConvProb1}
	\end{eqnarray}
	for all $\epsilon >0$, where (\ref{ConvProb1}) follows from Chebyshev inequality. By definition of convergence in probability and from (\ref{ConvProb1}), we obtain
	\begin{eqnarray}
		\frac{\langle \mbox{\boldmath $Z$}, \mbox{\boldmath $u$}\rangle}{n} \xrightarrow{P} 0. \label{<Z,u>}
	\end{eqnarray}
	According to (\ref{SII}) and (\ref{<Z,u>}), using Slutsky's lemma, for the fourth term in (\ref{Y}), we have
	\begin{eqnarray}
		 2 \frac{S_{\parallel}}{n} \frac{\langle \mbox{\boldmath $Z$}, \mbox{\boldmath $u$}\rangle}{n} \xrightarrow{P} 0. \label{SII<Z,u>}
	\end{eqnarray}
	
	Conditional on $\mbox{\boldmath $S$}_{\perp}$, we know that 
	\begin{eqnarray}
		\left \langle \frac{\mbox{\boldmath $S$}_{\perp}}{n^2}, \mbox{\boldmath $Z$} \right \rangle \sim \mathcal{N}\left(0,\frac{\| \mbox{\boldmath $S$}_{\perp}\| ^{2}}{n^4} \right).
	\end{eqnarray}
	Hence, by Chebyshev inequality for any $\epsilon >0$
	\begin{eqnarray}
		&& \!\!\!\!\!\!\! \mathbb{P}\left[ \left | \left \langle \frac{\mbox{\boldmath $S$}_{\perp}}{n^2}, \mbox{\boldmath $Z$}\right \rangle\right| \geq \epsilon \right] =  \nonumber \\
		&& \!\!\!\!\!\!\! \mathbb{E}\left[\mathbb{P}\left[ \left | \left \langle \frac{\mbox{\boldmath $S$}_{\perp}}{n^2}, \mbox{\boldmath $Z$}\right \rangle\right | \geq \epsilon \middle \vert \mbox{\boldmath $S$}_{\perp}\right] \right] \leq \frac{\mathbb{E}\left[\left \| \mbox{\boldmath $S$}_{\perp}\right \| ^2\right]}{n^4 \epsilon ^2}.
	\end{eqnarray}
	Our goal is to prove that $\mathbb{E}\left[\| \mbox{\boldmath  $S$}_{\perp}\| ^2\right]/n^4 \rightarrow 0$ as $n, K_a(n) \rightarrow \infty$, which results into $\langle \mbox{\boldmath $S$}_{\perp}, \mbox{\boldmath $Z$} \rangle /n$ converges to $0$ in probability. To this end, by definition of $\mbox{\boldmath $S$}_{\perp}$, we have
	\begin{eqnarray}
		&& \!\!\!\!\!\!\!\!\!\!\! \frac{\mathbb{E}\left[\| \mbox{\boldmath  $S$}_{\perp} \| ^2 \right]}{n^4} = \frac{nP}{n^4}\left  ( \sum_{i = 1}^{K_a(n)} \mathbb{E}\left[\| \mbox{\boldmath $q$}_{i} \| ^2 \right] + 2 \!\!\!\!\!\! \sum_{i<j \in [K_a(n)]} \!\!\!\!\!\! \mathbb{E}\left[\langle \mbox{\boldmath  $q$}_{i}, \mbox{\boldmath $q$}_{j} \rangle\right]\right )\nonumber \\
		&& \!\!\!\!\!\!\!\!\!\! = \frac{P}{n^3}\left(K_a(n) \mathbb{E}\left [\| \mbox{\boldmath $q$}\| ^2 \middle \vert t>0\right ] +  2 \!\!\!\!\!\!\sum_{i<j\in [K_a(n)]}\!\!\!\!\!\! \left \langle \mathbb{E}[\mbox{\boldmath $q$}_{i}], \mathbb{E}[\mbox{\boldmath $q$}_{j}] \right \rangle  \right)\label{middle_step1}\\
		&& \!\!\!\!\!\!\!\!\!\! = \frac{P\beta}{n^2}\left(1 - \frac{1}{2n}\right)  \rightarrow 0, \label{final_step1}
	\end{eqnarray}
	where (\ref{middle_step1}) follows from $\mbox{\boldmath $q$}_{i}$ being i.i.d., resulting from i.i.d. uniformly distributed $\mbox{\boldmath $s$}_{i}$ restricted to hemisphere. By discussion in \cite[Section 9.3.2]{DS1999}, for points $\mbox{\boldmath $s$}_{i}$ restricted to hemisphere with axis $\mbox{\boldmath $u$}$, the expectation $\mathbb{E}[\mbox{\boldmath $s$}_{i}]$ must be parallel to $\mbox{\boldmath $u$}$, thus for a $b>0$, 
	\begin{eqnarray}
		\mathbb{E}[\mbox{\boldmath $s$}_{i}]  = b \mbox{\boldmath $u$}.\label{ExpDiru}
	\end{eqnarray}
	Taking expectation from both sides of (\ref{Decompose}) and by (\ref{ExpDiru}), we have
	\begin{eqnarray}
		\mathbb{E}[\mbox{\boldmath $q$}_{i}] &&\!\!\!\!\!\!\!\!\!\! = \mathbb{E}[\mbox{\boldmath $s$}_{i}] - \mathbb{E}[\langle \mbox{\boldmath $s$}_{i}, \mbox{\boldmath $u$}\rangle] \mbox{\boldmath $u$}\\
		&& \!\!\!\!\!\!\!\!\!\! = \mathbb{E}[\mbox{\boldmath $s$}_{i}] - \left  \langle \mathbb{E}[\mbox{\boldmath $s$}_{i}], \mbox{\boldmath $u$}\right \rangle \mbox{\boldmath $u$}\\
		&&  \!\!\!\!\!\!\!\!\!\! = b\mbox{\boldmath $u$}  - b\mbox{\boldmath $u$} = 0,
	\end{eqnarray}
which results into zero cross terms in (\ref{middle_step1}). Taking this into account along with (\ref{Expectperp}) yields (\ref{final_step1}). Hence,
\begin{eqnarray}
	\frac{1}{n}\left \langle \mbox{\boldmath $S$}_{\perp}, \mbox{\boldmath $Z$} \right \rangle \xrightarrow{P} 0.\label{<SI_,Z>}
\end{eqnarray}
	According to (\ref{Y}), combining the results from (\ref{SII}), (\ref{SI_}), (\ref{|Z|^2}), (\ref{SII<Z,u>}), and (\ref{<SI_,Z>}) yields to
	\begin{eqnarray}
		\frac{\| \mbox{\boldmath $Y$}\|}{n} \xrightarrow{P}  \sqrt{\frac{2P}{\pi}\beta ^2+P \beta}. \label{Y/n}
	\end{eqnarray}
	
	Using the decomposed version of $\mbox{\boldmath $Y$}$ in (\ref{DecompY}), we obtain
	\begin{eqnarray}
		\frac{1}{n}\left \langle \mbox{\boldmath $Y$}, \mbox{\boldmath $u$}\right \rangle = \frac{S_{\parallel}}{n} + \frac{\langle \mbox{\boldmath $Z$}, \mbox{\boldmath $u$}\rangle}{n},
    \end{eqnarray}	 
    which results into 
    \begin{eqnarray}
    	 \frac{1}{n}\left \langle \mbox{\boldmath $Y$}, \mbox{\boldmath $u$}\right \rangle \xrightarrow{P} \beta \sqrt{\frac{2P}{\pi}} ,\label{<Y,u>}
    \end{eqnarray}
    by (\ref{SII}) and (\ref{<Z,u>}). Finally, combining the results from (\ref{Y/n}) and (\ref{<Y,u>}), we complete the proof as the following
    \begin{eqnarray}
    	\langle \hat{\mbox{\boldmath $u$}}, \mbox{\boldmath $u$} \rangle = \left \langle \frac{\mbox{\boldmath $Y$}}{\| \mbox{\boldmath $Y$}\|}, \mbox{\boldmath $u$} \right \rangle = \frac{\langle \mbox{\boldmath $Y$},  \mbox{\boldmath $u$}\rangle / n}{\| \mbox{\boldmath $Y$}\| / n} \xrightarrow{P} \sqrt{\frac{2\beta}{2\beta + \pi}}.
    \end{eqnarray}
\end{proof}

\section{Proof of Theorem \ref{THMRetention}} \label{ProofTHMRetention}

\begin{proof}
	We decompose $\hat{\mbox{\boldmath $u$}}$ as
	\begin{eqnarray}
		\hat{\mbox{\boldmath $u$}} = c_n \mbox{\boldmath $u$} + \sqrt{1-c_n^2} \mbox{\boldmath $v$}_{n},\label{Decomp_hatu}
	\end{eqnarray}
	where $\mbox{\boldmath $v$}_{n} \in \mathcal{U}^{\perp}$ and $\| \mbox{\boldmath $v$}_{n}\|  = 1$. Here, $c_n \triangleq \langle \hat{\mbox{\boldmath $u$}}, \mbox{\boldmath $u$}\rangle \xrightarrow{P} c$ as proved in Theorem \ref{ThmConvToc}. By decomposition in (\ref{Decomp_hatu}), we have
	\begin{eqnarray}
		\langle \mbox{\boldmath $x$}_{i}, \hat{\mbox{\boldmath $u$}}\rangle = c_n \langle \mbox{\boldmath $x$}_{i}, \mbox{\boldmath $u$}\rangle + \sqrt{1- c_n^2}\langle  \mbox{\boldmath $x$}_{i}, \mbox{\boldmath $v$}_{n}\rangle.
	\end{eqnarray}
	
	Define the normalized score
\begin{eqnarray} \label{T_n}
	T_n \triangleq \sqrt{n}\left\langle \frac{\mbox{\boldmath $x$}_{i}}{\sqrt{nP}},\hat{\mbox{\boldmath $u$}}\right\rangle = \sqrt{n}\langle \mbox{\boldmath $s$}_{i},\hat{\mbox{\boldmath $u$}}\rangle.
\end{eqnarray}
Then
	\begin{eqnarray}
	p_{\mathrm{ret},n}(\tau_n)= \mathbb{P}[T_n \ge \gamma_n | \text{$i$ is the sent codeword}],
\end{eqnarray}
where $\gamma_n=\sqrt n\,\tau_n$.

\subsubsection{For the Search only Limited to $\hat{\mathcal{H}}$}

	For fixed orthonoraml vectors $(\mbox{\boldmath $u$}, \mbox{\boldmath $v$}_{n})$, and $\mbox{\boldmath $s$}_{i} \sim \mathrm{Unif}\{ \mbox{\boldmath $s$}\in \mathbb{S}^{n-1}: \langle \mbox{\boldmath $s$}, \mbox{\boldmath $u$}\rangle  \geq 0\}$ uniformly distributed points on hemisphere with radius $1$, we claim 
	\begin{eqnarray}
		\left( \sqrt{n} \langle \mbox{\boldmath $s$}_{i}, \mbox{\boldmath $u$}\rangle , \sqrt{n} \langle \mbox{\boldmath $s$}_{i},\mbox{\boldmath $v$}_{n}\rangle \right) \xrightarrow{d} (|N_1|,N_2),
	\end{eqnarray}
	where random variables $N_1$ and $N_2$ are independent with distribution $\mathcal{N}(0,1)$. To prove this claim, without loss of generality, let $\mbox{\boldmath $u$} = \mbox{\boldmath  $e$}_{1}$ and $\mbox{\boldmath $v$}_{n} = \mbox{\boldmath $e$}_{2}$. As discussed earlier, we generate uniformly  distributed points on hemisphere by normalizing points generated i.i.d. by Gaussian distribution as
	\begin{eqnarray}
		\mbox{\boldmath $s$}_{i} = \frac{\mbox{\boldmath $P$}}{\| \mbox{\boldmath $P$}\|}, 
    \end{eqnarray}	  
    where $\mbox{\boldmath $P$} = (P_1,...,P_n)  \sim \mathcal{N}(\mbox{\boldmath $0$},  \mathrm{I}_{n})$. Hence,
    \begin{eqnarray}
    	&& \sqrt{n} \langle \mbox{\boldmath $s$}_{i}, \mbox{\boldmath $u$}\rangle = \sqrt{n} \frac{P_1}{\| \mbox{\boldmath $P$}\|},\\
    	&& \sqrt{n} \langle \mbox{\boldmath $s$}_{i}, \mbox{\boldmath $v$}_{n} \rangle = \sqrt{n} \frac{P_2}{\| \mbox{\boldmath $P$}\|}.
    \end{eqnarray}
	By WLLN, we have
	\begin{eqnarray}
		\frac{\| \mbox{\boldmath $P$}\|}{\sqrt{n}} = \sqrt{\frac{1}{n}\sum_{i=1}^{n}P_i^2} \xrightarrow{P} 1.\label{WLLNP}
	\end{eqnarray}
	According to  (\ref{WLLNP}), by Slutsky's lemma and hemisphere constraint ($P_1>0$), we obtain
	\begin{eqnarray}
		&&\!\!\!\!\!\!\!\!\!\!\!\!\!\!\! (\sqrt{n} \langle \mbox{\boldmath $s$}_{i}, \mbox{\boldmath $u$}\rangle, \sqrt{n} \langle \mbox{\boldmath $s$}_{i}, \mbox{\boldmath $v$}_{n} \rangle)= \nonumber\\
		&&\ \ \ \ \ \ \ \ \ \left(\sqrt{n} \frac{P_1}{\| \mbox{\boldmath $P$}\|}, \sqrt{n} \frac{P_2}{\| \mbox{\boldmath $P$}\|} \right) \xrightarrow{d} (|N_1|,N_2), \label{|N_1|N_2}
	\end{eqnarray}
	which proves the claim. Note that since $P_1$ and $P_2$ are independent, $N_1$ and $N_2$ are independent as well. Now, by definition in (\ref{T_n}) and decomposition in (\ref{Decomp_hatu}), we have
	\begin{eqnarray}
		T_n = c_n\sqrt{n}\left\langle \mbox{\boldmath $s$}_i,\mbox{\boldmath $u$}\right\rangle
+\sqrt{1-c_n^2}\sqrt{n}\left\langle \mbox{\boldmath $s$}_i,\mbox{\boldmath $v$}_n\right\rangle.
	\end{eqnarray}
	
	 Given that $c_n \xrightarrow{P} c$, and the result in (\ref{|N_1|N_2}), once again by Slutsky's lemma, we have
	\begin{eqnarray}
		T_n \xrightarrow{d} W_0 \triangleq c|N_1|+\sqrt{1-c^2}N_2. \label{cN-1}
	\end{eqnarray}

	\begin{lemma}(Portmanteau) \label{Portmanteau}
		For any random variables $X_n$ and $X$, the following statements are equivalent:
		\begin{enumerate}
			\item $X_n \xrightarrow{d} X$.
			\item $\mathbb{P}[X_n \in \mathcal{B}] \rightarrow \mathbb{P}[X \in \mathcal{B}]$, for all Borel sets $\mathcal{B}$ provided that $\mathbb{P}[X\in \delta\mathcal{B}] = 0$, where $\delta \mathcal{B} = \bar{\mathcal{B}} - \mathcal{B}^{\circ}$ is the boundary of $\mathcal{B}$, $\bar{\mathcal{B}}$ is the closure of $\mathcal{B}$, and $\mathcal{B}^{\circ}$ is the interior of $\mathcal{B}$.
		\end{enumerate}
	\end{lemma}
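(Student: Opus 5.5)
We prove the two items of Lemma~\ref{Portmanteau} by cycling through bounded continuous test functions. We take as the working definition of (1) that $\mathbb{E}[f(X_n)]\to\mathbb{E}[f(X)]$ for every bounded continuous $f:\mathbb{R}\to\mathbb{R}$; for real random variables this is equivalent to convergence of CDFs at continuity points. The plan is to establish
\[
(1)\Rightarrow\big(\limsup_n \mathbb{P}[X_n\in F]\le \mathbb{P}[X\in F]\ \text{for closed } F\big)\Rightarrow(2)\Rightarrow(1).
\]

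\textbf{Step 1: closed sets.} Fix a closed set $F$ and put $F_k=\{x:d(x,F)<1/k\}$. Define
\[
f_k(x)=\max\{0,\,1-k\,d(x,F)\}.
\]
Each $f_k$ is bounded and Lipschitz, and $\mathbf{1}_F\le f_k\le \mathbf{1}_{F_k}$. Hence
\[
\limsup_n \mathbb{P}[X_n\in F]\le \lim_n \mathbb{E}[f_k(X_n)]=\mathbb{E}[f_k(X)]\le \mathbb{P}[X\in F_k].
\]
Since $F$ is closed, $F_k\downarrow F$, so continuity of measure gives the closed-set bound. Taking complements yields the open-set version $\liminf_n \mathbb{P}[X_n\in G]\ge \mathbb{P}[X\in G]$ for open $G$.

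\textbf{Step 2: from closed sets to (2).} Let $\mathcal{B}$ be a Borel set with $\mathbb{P}[X\in\delta\mathcal{B}]=0$. Then $\mathcal{B}^\circ\subset\mathcal{B}\subset\bar{\mathcal{B}}$, and applying Step 1 to $\bar{\mathcal{B}}$ and $\mathcal{B}^\circ$ gives
\[
\mathbb{P}[X\in\mathcal{B}^\circ]\le\liminf_n \mathbb{P}[X_n\in\mathcal{B}]\le\limsup_n \mathbb{P}[X_n\in\mathcal{B}]\le \mathbb{P}[X\in\bar{\mathcal{B}}].
\]
The two outer quantities coincide with $\mathbb{P}[X\in\mathcal{B}]$ because the boundary is $X$-null. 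This proves (2).

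\textbf{Step 3: (2) implies (1).} This is the step we expect to be the main obstacle. Take $f$ bounded and continuous, and shift and scale it so that $0<f<1$. By the layer-cake formula,
\[
\mathbb{E}[f(X_n)]=\int_0^1 \mathbb{P}[f(X_n)>t]\,dt .
\]
By continuity of $f$, the boundary of $\{f>t\}$ is contained in $\{f=t\}$. The sets $\{f=t\}$ are disjoint, so at most countably many have positive $X$-probability, and for Lebesgue-almost every $t$ we have $\mathbb{P}[X\in\{f=t\}]=0$. For those $t$, (2) gives $\mathbb{P}[f(X_n)>t]\to\mathbb{P}[f(X)>t]$, and dominated convergence in $t$ gives $\mathbb{E}[f(X_n)]\to\mathbb{E}[f(X)]$.

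Alternatively, if (1) is taken to mean CDF convergence at continuity points, the converse is immediate: apply (2) to $\mathcal{B}=(-\infty,x]$, whose boundary $\{x\}$ is $X$-null exactly at continuity points of $F_X$. The forward direction then still needs Skorokhod representation or the test-function argument of Steps 1 and 2.
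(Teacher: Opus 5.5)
The paper gives no proof of this lemma. It is recalled from the cited reference (van der Vaart's \emph{Asymptotic Statistics}, which the paper also uses for Slutsky's lemma and Prohorov's theorem). It is used only to pass from $T_n \xrightarrow{d} W_0$ to $\mathbb{P}[T_n\ge 0]\to\mathbb{P}[W_0\ge 0]$, via $\mathbb{P}[W_0=0]=0$.

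Your argument is a correct, self-contained proof along the standard textbook lines:
\begin{itemize}
\item the Lipschitz functions $f_k$ give the closed-set bound, and the open-set bound follows by complements;
\item sandwiching $\mathcal{B}$ between $\mathcal{B}^{\circ}$ and $\bar{\mathcal{B}}$ yields (2) when the boundary is $X$-null;
\item the layer-cake formula, together with the fact that at most countably many level sets $\{f=t\}$ carry positive $X$-mass, yields the converse.
\end{itemize}
Steps 1--3 never use that the underlying space is $\mathbb{R}$. They therefore prove the equivalence for random vectors as well (indeed in any metric space) under your test-function definition.

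The one point to tighten is definitional. The cited source defines $X_n \xrightarrow{d} X$ by convergence of $\mathbb{P}[X_n\le x]$ at continuity points of the limit CDF. Under that definition, your Steps 1--2 rest on the implication ``CDF convergence $\Rightarrow$ $\mathbb{E}[f(X_n)]\to\mathbb{E}[f(X)]$ for bounded continuous $f$.'' Your first sentence asserts this rather than proves it, and it is itself one of the Portmanteau equivalences. It can be proved, for example, by approximating $f$ on a large interval by step functions with jumps at continuity points of the limit CDF, or via Skorokhod representation.

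So your closing remark, that ``the test-function argument of Steps 1 and 2'' handles the forward direction under the CDF definition, is slightly too quick. With that step supplied, or with your test-function definition adopted explicitly, the proof is complete.
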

	
	If $\gamma_n=0$ for all $n$, since $W_0$ is a continuous random variable, then
	\begin{eqnarray}
		\mathbb{P}[W_0=0]=0,
	\end{eqnarray}
	which satisfies the condition of Portmanteau lemma. Therefore, 
	\begin{equation}
		p_{\mathrm{ret},n}(0)=\mathbb{P}[T_n\geq 0| \text{$i$ is sent}] \rightarrow \mathbb{P}[W_0\geq 0].
	\end{equation}
	we calculate the simple limit probability, regarding the scenario where we only search over $\hat{\mathcal{H}}$ as the following chain 
	\begin{eqnarray}
		&&\!\!\!\!\!\!\!\!\!\!\!\!\!\!\! \mathbb{P}\left[W_0 \geq 0\right]  = \mathbb{E}\left[\mathbb{P}\left[c|N_1|+\sqrt{1-c^2}N_2 \geq 0 \middle \vert |N_1| \right]\right]\\
		&&  = \mathbb{E}\left[\mathbb{P}\left[N_2 \geq \frac{-c|N_1|}{\sqrt{1-c^2}} \middle \vert |N_1| \right]\right]\\
		&& = \mathbb{E}\left[Q\left(\frac{-c|N_1|}{\sqrt{1-c^2}}\right)\right]\\
		&&= \int_{0}^{\infty} \int_{-\infty}^{\frac{cn_1}{\sqrt{1-c^2}}} \frac{1}{\sqrt{2\pi}}e^{-\frac{z^2}{2}} \mathrm{d}z \frac{2}{\sqrt{2\pi}} e^{-\frac{n_1^2}{2}} \mathrm{d}n_1\label{halfnormal}\\
		&&  = \frac{1}{\pi}\int_{0}^{\infty} \int_{-\frac{\pi}{2}}^{\arctan \frac{c}{\sqrt{1-c^2}}}e^{-\frac{r^2}{2}}r\mathrm{d}r \mathrm{d}\theta \label{polar}\\
		&&  = \frac{1}{\pi}\left(\frac{\pi}{2}+\arctan \frac{c}{\sqrt{1-c^2}}\right)\\
		&& = \frac{1}{2} + \frac{1}{\pi}\arcsin c,
	\end{eqnarray}
	where (\ref{halfnormal}) follows from $|N_1|$ being half-normally distributed, and (\ref{polar}) is by writing the integral in polar coordinates.

\subsubsection{For the Search on Sequence of Spherical Caps $\{\hat{\mathcal{H}}_{n}\}$}	
	We follow the same approach as for $\hat{\mathcal{H}}$. Here, again for fixed $(\mbox{\boldmath $u$}, \mbox{\boldmath $v$}_{n})$, the normalized codewords $\mbox{\boldmath $s$}_{i}$ are uniformly distributed on $\{ \mbox{\boldmath $s$}\in \mathbb{S}^{n-1}: \langle \mbox{\boldmath $s$},\mbox{\boldmath $u$}\rangle \geq \tau_n \}$. 
	According to our choice of threshold
\begin{eqnarray}
	\tau_n=-\frac{a_n}{\sqrt{n}},
\end{eqnarray}
where $a_n\to\infty$ and $a_n=o(\sqrt{n})$. Then $\tau_n\to 0^{-}$, so the search region is only slightly larger than the hemisphere, but
\begin{eqnarray}
	\gamma_n=\sqrt{n}\tau_n=-a_n\to -\infty.
\end{eqnarray}
Since the bound $\gamma_n$ diverges negatively, the total variation distance between the conditional $(P_1>0)$ and unconditional projection laws vanishes as $n\rightarrow \infty$. Thus, following the same reasoning, we have 
\begin{eqnarray}
	\left(\sqrt{n}\langle \mbox{\boldmath $s$}_{i},\mbox{\boldmath $u$}\rangle, \sqrt{n}\langle \mbox{\boldmath $s$}_{i}, \mbox{\boldmath $v$}_{n}\rangle\right) \rightarrow (N_1,N_2). 
\end{eqnarray}
	Therefore,
	\begin{eqnarray}
		T_n \xrightarrow{d} W\triangleq cN_1+\sqrt{1-c^2}N_2.
	\end{eqnarray}
	By Portmanteau lemma,
	\begin{eqnarray}
		p_{\mathrm{ret},n}(\tau_n)=\mathbb{P}[T_n\ge \gamma_n| \text{$i$ is sent}] \rightarrow \mathbb{P}[W\ge \gamma].
	\end{eqnarray}

\begin{thm}(Prohorov's Theorem) \cite{vander}
	If the sequence of random variables $\{ X_n\}$ converges in distribution to $X$, then $\{X_n\}$ is uniformly tight or $X_n = O_{P}(1)$.
\end{thm}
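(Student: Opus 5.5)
The statement to prove is Prohorov's theorem in the direction the paper uses: if $X_n \xrightarrow{d} X$ for real random variables, then $\{X_n\}$ is uniformly tight, i.e., $X_n = O_P(1)$. I will argue directly from the definition of convergence in distribution via distribution functions, together with the Portmanteau lemma (Lemma \ref{Portmanteau}), and will not need the compactness machinery of the full theorem.

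\emph{Step 1: tightness of the limit.} Fix $\epsilon>0$. Since $X$ is a single real random variable, $\mathbb{P}[|X|>M]\to 0$ as $M\to\infty$, by continuity from above of probability measures. The distribution of $X$ has at most countably many atoms. So I can choose $M$ such that $\mathbb{P}[|X|>M]<\epsilon/2$ and also $\mathbb{P}[|X|=M]=0$.

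\emph{Step 2: transfer to $X_n$.} Let $\mathcal{B}=\{x:|x|>M\}$. Its boundary is $\{-M,M\}$, which has $X$-probability zero by the choice of $M$. Lemma \ref{Portmanteau} then gives
\[
\mathbb{P}[|X_n|>M]\to\mathbb{P}[|X|>M]<\epsilon/2 .
\]
Hence there is an $n_0$ with $\mathbb{P}[|X_n|>M]<\epsilon$ for all $n\ge n_0$.

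\emph{Step 3: the finitely many early indices.} For each $n<n_0$, the single random variable $X_n$ is tight, so there is an $M_n$ with $\mathbb{P}[|X_n|>M_n]<\epsilon$. Put $M^\ast=\max\{M,M_1,\dots,M_{n_0-1}\}$. Then
\[
\sup_n \mathbb{P}[|X_n|>M^\ast]<\epsilon .
\]
Since $\epsilon>0$ was arbitrary, this is exactly uniform tightness, i.e., $X_n=O_P(1)$.

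The only delicate point is choosing $M$ at a continuity point of $|X|$, so that the Portmanteau lemma applies. Countability of the atoms makes this choice possible.

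For the application in Theorem \ref{THMRetention}, take $X_n=T_n$ and $X=W$. Given $\epsilon>0$, choose $M^\ast$ as above. Because $\gamma_n=-a_n\to-\infty$, we have $\gamma_n<-M^\ast$ for all large $n$. For such $n$,
\[
\mathbb{P}[T_n<\gamma_n]\le \mathbb{P}[|T_n|>M^\ast]<\epsilon .
\]
This gives $p_{\mathrm{ret},n}(\tau_n)\to 1$.
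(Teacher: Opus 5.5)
Your proof is correct. The paper gives no argument of its own beyond citing \cite{vander}, and your three steps match the standard proof there: tightness of the single limit $X$, transfer to all large $n$ via the Portmanteau lemma, and enlargement of $M$ to cover the finitely many remaining indices.

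The one difference is cosmetic. The cited proof uses the closed-set form $\limsup_n \mathbb{P}[|X_n|\ge M]\le \mathbb{P}[|X|\ge M]$, which needs no continuity point. You instead use the continuity-set form stated in Lemma~\ref{Portmanteau}, and you correctly choose $M$ to avoid the countably many atoms of $|X|$. For the application in Theorem~\ref{THMRetention}, only the eventual bound from your Step~2 is needed.
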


Since $T_n \xrightarrow{d} W$, by Prohorov's Theorem, we have $T_n=O_P(1)$. Given that $\gamma_n \rightarrow -\infty$, it yields
\begin{eqnarray}
	\mathbb{P}[T_n\ge \gamma_n]\to 1,
\end{eqnarray}
resulting into
\begin{eqnarray}
p_{\mathrm{ret},n}(\tau_n)\to 1.
\end{eqnarray}

\end{proof}

\section{Proof of Theorem \ref{H_nK_a}} \label{ProofH_nK_a}

\begin{proof}

Fix $n$. We know that $\hat{\mathcal{H}}_{n}$ is larger than $\hat{\mathcal{H}}$, and hence

\begin{eqnarray}
	\mathbb{P}\left[| \hat{\mathcal{H}}_{n}| \geq K_a(n)\right]> \mathbb{P}\left[|\hat{\mathcal{H}}| \geq K_a(n)\right],
\end{eqnarray}
which results into
	\begin{eqnarray}
		\mathbb{P}\left[| \hat{\mathcal{H}}_{n}| < K_a(n)\right] < \mathbb{P}\left[| \hat{\mathcal{H}}| < K_a(n)\right]. \label{H_nH}
    \end{eqnarray}	 
     Our goal in this proof is then shifted to $\mathbb{P}\left[| \hat{\mathcal{H}}| < K_a(n)\right] \rightarrow 0$. W.L.O.G, we assume $\mathcal{S}=[K_a(n)]$. We split the cardinality $| \hat{\mathcal{H}}|$ into 
	\begin{eqnarray}
		| \hat{\mathcal{H}} | = \underbrace{\!\! \sum_{i=1}^{K_a(n)} \mathbf{1}\{\langle \mbox{\boldmath $s$}_{i}, \hat{\mbox{\boldmath $u$}}\rangle \geq 0 \}}_{\triangleq H_{\mathrm{true}}^{(n)}} +\!\! \underbrace{\sum_{j=K_a(n)+1}^{M_n}\!\!\!\!\! \mathbf{1}\{\langle \mbox{\boldmath $s$}_{j},\hat{\mbox{\boldmath $u$}}\rangle \geq 0 \}}_{\triangleq H_{\mathrm{other}}^{(n)}}.\label{CarH1}
	\end{eqnarray}
	A simple observation is that conditional on $\mbox{\boldmath $Y$}$ (or 
$\hat{\mbox{\boldmath $u$}}$), the remaining $M_n-K_a(n)$ points
$\mbox{\boldmath $s$}_{K_a(n)+1},\ldots,\mbox{\boldmath $s$}_{M_n}$ are i.i.d. uniform on
$\mathbb S^{n-1}$ and independent of $\hat{\mbox{\boldmath $u$}}$ (or $\mbox{\boldmath $Y$}$). Therefore, conditional on
$\mbox{\boldmath $Y$}$, the count $H^{(n)}_{\mathrm{other}}$ has distribution $H^{(n)}_{\mathrm{other}} \mid \mbox{\boldmath $Y$} \sim \mathrm{Bin}\!\left(M_n-K_a(n),\,p_n\right)$, where $p_n = \mathbb P\!\left[\sqrt{n}\left\langle \mbox{\boldmath $s$}_{j},\hat{\mbox{\boldmath $u$}}\right\rangle \ge 0\right]$.

By rotational symmetry, we can rotate the direction of $\hat{\mbox{\boldmath $u$}}$ to $\mbox{\boldmath $e$}_{1}$ without changing the distribution of $\sqrt{n}\langle \mbox{\boldmath $s$}_{j}, \hat{\mbox{\boldmath $u$}}\rangle$ \cite{DS1999}, which results into $\sqrt{n}\langle \mbox{\boldmath $s$}_{j}, \hat{\mbox{\boldmath $u$}}\rangle \xrightarrow{d}\mathcal{N}(0,1)$. Hence, by Portmanueau lemma for all $j\in \{K_a(n)+1,...,M_n\}$, it yields
\begin{equation}
	p_n= \mathbb P\!\left[\sqrt{n}\langle\mbox{\boldmath $s$}_{j},	\hat{\mbox{\boldmath $u$}}\rangle \ge 0\right]
\longrightarrow \mathbb P[\mathcal{N}(0,1) \geq 0]=\frac{1}{2}.
\label{pnlimit}
\end{equation}

\textit{Remark 1}. We previously established the limit $p_n \rightarrow 1/2$ by exploiting the rotational symmetry of the uniform distribution on the sphere. Here, we adopt a more rigorous and general approach that can be extended to distributions that are not necessarily rotationally symmetric.
By substituting $\hat{\mbox{\boldmath $u$}} = \mbox{\boldmath $Y$}/  \| \mbox{\boldmath $Y$} \|$, we have
\begin{eqnarray}
		&& \mathbb{P}\left[\langle \mbox{\boldmath $s$}_{i}, \hat{\mbox{\boldmath $u$}}\rangle \geq 0 \middle \vert \text{$i$ is not sent}\right]\nonumber \\
		&& =  \mathbb{P}\left[\left \langle \mbox{\boldmath $s$}_{i}, \sqrt{nP} \sum_{\substack{j=1 \\ j\neq i}}^{K_a(n)}\mbox{\boldmath $s$}_{j} + \mbox{\boldmath $Z$} \right \rangle \geq 0 \right]\label{|Y|Removed}\\
		&& = \mathbb{P}\left[ \sum_{\substack{j=1 \\ j\neq  i}}^{K_a(n)} \langle \mbox{\boldmath $s$}_{i}, \mbox{\boldmath $s$}_{j}\rangle + \left \langle \mbox{\boldmath $s$}_{i}, \frac{\mbox{\boldmath $Z$}}{\sqrt{nP}}\right \rangle \geq 0\right],\label{SumTerm}
	\end{eqnarray}
	where (\ref{|Y|Removed}) is by $\| \mbox{\boldmath $Y$}\| >0$. 
	
	\begin{defn}(Triangular Array of Random Variables)
		Suppose that for each $n$, random variables 
		\begin{eqnarray}
			X_{n,1},...,X_{n,r_n}
		\end{eqnarray}
		are independent; the probability space for the sequence may change with $n$. Such a collection is called \emph{triangular array} of random variables. \cite{Bilingsly2012}
	\end{defn}
	
	Condition on $\mbox{\boldmath $s$}_{i}$, let define
	\begin{eqnarray}
		N_{n,j} \triangleq \langle  \mbox{\boldmath $s$}_{i}, \mbox{\boldmath $s$}_{j} \rangle.
	\end{eqnarray}
	Random variables $\{N_{n,j}\}_{j=1}^{K_a(n)}$ given $\mbox{\boldmath $s$}_{i}$ are independent, because $\mbox{\boldmath $s$}_{j}$ are generated independently. Due to symmetry, the distribution of $N_{n,j}$ given $\mbox{\boldmath $s$}_{i}$ is the same as the first coordinate of a uniform vector on $\mathbb{S}^{n-1}$ with PDF \cite{DS1999}
	\begin{eqnarray}
		f_{N_{n,j} | \mbox{\scriptsize \boldmath $s$}_{i}}(t) = \frac{1}{\sqrt{\pi}}\frac{\Gamma\left(\frac{n}{2}\right)}{\Gamma\left(\frac{n-1}{2}\right)}(1-t^2)^{\frac{n-3}{2}} \ \mathrm{for} \ |t|\leq 1,  \label{DisNj} 
	\end{eqnarray}
		where $\Gamma(.)$ is the Gamma function. It is clear that the distribution of $N_{n,j}$ given $\mbox{\boldmath $s$}_{i}$ depends on $n$, with mean and variance
	\begin{eqnarray}
		\mathbb{E}\left[N_{n,j}  \middle \vert \mbox{\boldmath $s$}_{i}\right] = 0, \ \ \ \mathrm{Var}\left(N_{n,j} \middle \vert \mbox{\boldmath $s$}_{i}\right) = \frac{1}{n}.
	\end{eqnarray}
Therefore, the terms $\{N_{n,j}\}_{j=1}^{K_a(n)}$ forms a \emph{triangular array}, as both the distribution of its components and the number of terms depend on $n$. While intuitively the summation term in (\ref{SumTerm}), conditional on $\mbox{\boldmath $s$}_{i}$, should converge to a Gaussian distribution, the classical Central Limit Theorem (CLT) does not apply because it requires i.i.d. random variables whose distribution is fixed. To justify the results rigorously, we instead employ Lindeberg-Feller CLT. We first recall Lindeberg-Feller CLT.

\begin{thm}(Lindeberg-Feller CLT) \cite{Bilingsly2012}
 	Suppose $\{X_{n,j}\}$ is a triangular array with $\mathbb{E}[X_{n,j}] = 0$, $\mathbb{E}[X_{n,j} ^2] = \sigma_{n,j}^2$, and $s_n^2 \triangleq \sum_{j=1}^{r_n}\sigma_{n,j} ^2$. If the Lindeberg condition holds
	\begin{eqnarray}
		\frac{1}{s_n^2} \sum_{j=1}^{r_n}\mathbb{E}\left[X_{n,j}^2 \mathbf{1}\{ |X_{n,j}| > \epsilon s_n \} \right] \rightarrow 0,  \label{LBCon}
	\end{eqnarray}
	for all $\epsilon >0$ as $n\rightarrow \infty$, then 
	\begin{eqnarray}
		\frac{1}{s_n} \sum_{j=1}^{r_n} X_{n,j}  \xrightarrow{d} \mathcal{N}(0,1). \nonumber
	\end{eqnarray}
\end{thm}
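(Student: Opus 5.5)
We prove the statement through characteristic functions. We use the row-wise independence that is built into the definition of a triangular array, and we assume $s_n^2>0$.

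\emph{Normalization.} First replace $X_{n,j}$ by $X_{n,j}/s_n$. This leaves the Lindeberg condition \eqref{LBCon} unchanged in form. We may therefore assume $s_n^2=\sum_{j=1}^{r_n}\sigma_{n,j}^2=1$, and write
\begin{equation*}
L_n(\epsilon)\triangleq\sum_{j}\mathbb{E}\left[X_{n,j}^2\mathbf{1}\{|X_{n,j}|>\epsilon\}\right]\to 0 .
\end{equation*}

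\emph{Step 1: no single summand dominates.} For every $\epsilon>0$ and every $j$,
\begin{equation*}
\sigma_{n,j}^2\le \epsilon^2+\mathbb{E}\left[X_{n,j}^2\mathbf{1}\{|X_{n,j}|>\epsilon\}\right]\le \epsilon^2+L_n(\epsilon).
\end{equation*}
Hence $\max_j\sigma_{n,j}^2\to 0$. This uniform smallness is what makes the later product approximations valid.

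\emph{Step 2: expand each characteristic function to second order.} Let $\varphi_{n,j}(t)=\mathbb{E}[e^{itX_{n,j}}]$. Use the standard Taylor bound
\begin{equation*}
\left|e^{ix}-1-ix+\tfrac{x^2}{2}\right|\le\min\{|x|^3/6,\;x^2\}.
\end{equation*}
Together with $\mathbb{E}[X_{n,j}]=0$, split each expectation on $\{|X_{n,j}|\le\epsilon\}$ and its complement. On the first event use the cubic bound, on the second the quadratic bound. This gives
\begin{equation*}
\sum_j\left|\varphi_{n,j}(t)-\left(1-\tfrac{t^2\sigma_{n,j}^2}{2}\right)\right|\le \epsilon|t|^3\sum_j\sigma_{n,j}^2+t^2L_n(\epsilon)=\epsilon|t|^3+t^2L_n(\epsilon).
\end{equation*}
Letting $n\to\infty$ and then $\epsilon\to0$, the right-hand side tends to $0$.

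\emph{Step 3: pass from sums to products.} Use the elementary inequality
\begin{equation*}
\left|\prod_j a_j-\prod_j b_j\right|\le\sum_j|a_j-b_j|,\qquad |a_j|,|b_j|\le 1 .
\end{equation*}
By Step 1, for fixed $t$ and large $n$ we have $0\le 1-t^2\sigma_{n,j}^2/2\le 1$ for all $j$, so the inequality applies. By independence within each row, $\mathbb{E}[e^{it\sum_jX_{n,j}}]=\prod_j\varphi_{n,j}(t)$. Step 2 then shows that this product is asymptotically equal to $\prod_j(1-t^2\sigma_{n,j}^2/2)$.

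Next compare with $e^{-t^2/2}=\prod_j e^{-t^2\sigma_{n,j}^2/2}$, using $|e^{-x}-(1-x)|\le x^2$ for $x\ge0$ and the same product inequality. The total error is at most
\begin{equation*}
\frac{t^4}{4}\sum_j\sigma_{n,j}^4\le\frac{t^4}{4}\,\max_j\sigma_{n,j}^2\,\sum_j\sigma_{n,j}^2\to 0,
\end{equation*}
again by Step 1. Therefore $\mathbb{E}[e^{it\sum_jX_{n,j}}]\to e^{-t^2/2}$ for every $t$.

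\emph{Step 4: conclude.} L\'evy's continuity theorem now gives $\frac{1}{s_n}\sum_j X_{n,j}\xrightarrow{d}\mathcal{N}(0,1)$.

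The main obstacle is Step 2: the expansion has to be uniform over a growing number of summands. The truncation at level $\epsilon$ is the key device. The cubic bound converts the small-value part into $\epsilon|t|^3$ times the total variance, which is $1$. The Lindeberg tail controls the large values, so the accumulated error over all $r_n$ terms stays small even though $r_n\to\infty$.
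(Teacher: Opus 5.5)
Your proof is correct. It is the standard characteristic-function proof of Lindeberg's theorem found in the cited Billingsley reference: normalize to $s_n^2=1$, deduce $\max_j\sigma_{n,j}^2\to 0$, bound $\sum_j|\varphi_{n,j}(t)-(1-t^2\sigma_{n,j}^2/2)|\le\epsilon|t|^3+t^2L_n(\epsilon)$, and compare products with $e^{-t^2/2}$. The paper gives no proof of its own and simply quotes the result from that source, so there is nothing further to compare.
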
 
To fit our problem within the framework of Lindeberg-Feller CLT, we verify that the Lindeberg condition holds for $\{N_{n,j}\}$. In our problem, we have
	\begin{eqnarray}
		s_n^2  = \sum_{\substack{j=1 \\ j\neq i}}^{K_a(n)} \mathrm{Var}(N_j | \mbox{\boldmath $s$}_{i}) = \frac{K_a(n)}{n} \rightarrow \beta. \label{sn2}
	\end{eqnarray}
For a fixed $\epsilon>0$, we focus on the argument of the sum in Lindeberg condition (\ref{LBCon}) for $\{N_{n,j}\}$ as follows
	\begin{eqnarray}
		&& \mathbb{E}\left[N_{n,j}^2 \mathbf{1}\{ |N_{n,j}| >\epsilon s_n \} \middle \vert \mbox{\boldmath $s$}_{i}\right] \stackrel{(a)}{\leq} \mathbb{P}\left[|N_{n,j}|>\epsilon s_n  \middle \vert \mbox{\boldmath $s$}_{i}\right] \nonumber\\
		&& \stackrel{(b)}{=} \frac{2}{\sqrt{\pi}} \frac{\Gamma\left (\frac{n}{2}\right)}{\Gamma\left(\frac{n-1}{2}\right)}\int_{\epsilon s_n}^{1}(1-t^2)^{\frac{n-3}{2}} \mathrm{d}t \\
		&&  < \frac{2}{\sqrt{\pi}} \frac{\Gamma\left(\frac{n}{2}\right)}{\Gamma\left(\frac{n-1}{2}\right)}\int_{\epsilon s_n}^{1} e^{-\frac{n-3}{2}t^2} \mathrm{d}t \label{log}\\
		&&  < \frac{\Gamma\left(\frac{n}{2}\right)}{\Gamma\left(\frac{n-1}{2}\right)}\sqrt{\frac{2}{n-3}} e^{-\frac{n-3}{2}\epsilon ^2 s_n^2},\label{GaussianTail}
	\end{eqnarray}
where $(a)$ is by $|N_{n,j}| \leq 1$, $(b)$ follows from the distribution of $N_{n,j}$ given $\mbox{\boldmath $s$}_{i}$ in (\ref{DisNj}), the upper bound in (\ref{log}) holds by $\log(1-x) \leq -x$ for $0\leq x <1$, and the bound in (\ref{GaussianTail}) follows from the Gaussian tail 
	\begin{eqnarray}
		\int_{a}^{\infty}e^{-\alpha t^2} \mathrm{d}t \leq \frac{1}{2}\sqrt{\frac{\pi}{\alpha}}e^{-\alpha  a^2}.
	\end{eqnarray}
Substituting (\ref{GaussianTail}) into Lindeberg condition (\ref{LBCon}) yields
	\begin{eqnarray}
		&&\frac{1}{s_n^2} \sum_{\substack{  j=1 \\ j\neq i}}^{K_a(n)} \mathbb{E}\left[N_{n,j}^2 \mathbf{1}\{ |N_{n,j}| > \epsilon s_n \}  \middle \vert \mbox{\boldmath $s$}_{i}\right]\nonumber \\
		&& < \frac{K_a(n)}{s_n^2}\frac{\Gamma\left(\frac{n}{2}\right)}{\Gamma\left(\frac{n-1}{2}\right)}\sqrt{\frac{2}{n-3}}e^{-\frac{n-3}{2}\epsilon ^2 s_n^2}\\
		&&  \sim n \sqrt{\frac{n}{n-3}}e^{-\frac{n-3}{2}\epsilon ^2 \beta} \rightarrow 0,\label{ApproxGamma}
	\end{eqnarray}
as $n\rightarrow \infty$, where (\ref{ApproxGamma}) follows from the asymptotic behavior of 
	\begin{eqnarray}
		\frac{\Gamma\left(\frac{n}{2}\right)}{\Gamma\left(\frac{n-1}{2}\right)} \sim \sqrt{\frac{n}{2}},
	\end{eqnarray}
for large $n$ and substituting $s_n^2$ given in (\ref{sn2}). Now that we proved ${N_j}$ satisfies Lindeberg condition, by Lindeberg-Feller CLT, we obtain
	\begin{eqnarray}
		\frac{1}{s_n} \sum_{\substack{j=1 \\ j\neq i}}^{K_a(n)}N_{n,j} \xrightarrow{d} \mathcal{N}(0,1),
	\end{eqnarray}
or 
	\begin{eqnarray}
		\sum_{\substack{j=1 \\ j\neq i}}^{K_a(n)}N_{n,j} \xrightarrow{d} \mathcal{N}(0,\beta). \label{NormalBeta}
	\end{eqnarray}

Since $\mbox{\boldmath $Z$} \sim \mathcal{N}(0,\mathrm{I}_{n})$, given $\mbox{\boldmath $s$}_{i}$, random variable $\langle \mbox{\boldmath $s$}_{i}, \mbox{\boldmath $Z$} / \sqrt{nP}\rangle$ has distribution $\mathcal{N}\left(0, \frac{1}{nP}\right)$. 
Hence, for all $\epsilon >0$, by Chebyshev inequality, we have
	\begin{eqnarray}
		\mathbb{P}\left[\left | \left\langle \mbox{\boldmath $s$}_{i}, \frac{\mbox{\boldmath $Z$}}{\sqrt{nP}}\right \rangle\right | > \epsilon\right] \leq \frac{1}{nP\epsilon^2} \rightarrow 0,
	\end{eqnarray}
	 as $n \rightarrow \infty$, resulting into 
	 \begin{eqnarray}
	 	\left\langle \mbox{\boldmath $s$}_{i}, \frac{\mbox{\boldmath $Z$}}{\sqrt{nP}}\right \rangle \xrightarrow{P} 0. \label{InProbtoZero}
	 \end{eqnarray}
	 Now, from results in (\ref{NormalBeta}) and (\ref{InProbtoZero}), by Slutsky's lemma, it yields that
	\begin{eqnarray}
		\sum_{\substack{j=1  \\ j\neq i}}^{K_a(n)} \langle \mbox{\boldmath $s$}_{i}, \mbox{\boldmath $s$}_{j}\rangle + \left \langle \mbox{\boldmath $s$}_{i},\frac{\mbox{\boldmath $Z$}}{\sqrt{nP}}\right \rangle  \xrightarrow{d} \mathcal{N}(0,\beta).
	\end{eqnarray}
	Again, by Portmanteau Lemma (Lemma~\ref{Portmanteau}), as $n\rightarrow \infty$, we have
	\begin{eqnarray}
		&& \mathbb{P}\left[\sum_{\substack{j=1  \\  j\neq i}}^{K_a(n)}\langle \mbox{\boldmath $s$}_{i}, \mbox{\boldmath $s$}_{j}\rangle + \left \langle  \mbox{\boldmath $s$}_{i}, \frac{\mbox{\boldmath $Z$}}{\sqrt{nP}}\right\rangle \geq  0 \middle \vert \mbox{\boldmath $s$}_{i}\right] \nonumber\\
		&& \ \ \ \ \ \ \ \ \ \ \ \ \ \ \ \ \ \longrightarrow\mathbb{P}\left[\mathcal{N}(0,\beta)   \geq 0\right] =  \frac{1}{2}.\label{1/2Limit}
	\end{eqnarray}

Taking expectation from both sides of \eqref{CarH1}, we have
\begin{eqnarray}
	\mathbb E\!\left[\left|\hat{\mathcal H}\right|\right]
	&& \!\!\!\!\!\!\!\!\!\!\!\! = \mathbb E\!	\left[\sum_{i=1}^{K_a(n)} \mathbf{1}\!\left\{\left\langle \mbox{\boldmath $s$}_i,\hat{\mbox{\boldmath $u$}}\right\rangle \ge 0\right\}\right]
	+ \mathbb E\!\left[\mathbb E\!\left[H^{(n)}_	{\mathrm{other}} \middle| \mbox{\boldmath $Y$}\right]\right] \nonumber\\
	&& \!\!\!\!\!\!\!\!\!\!\!\!\!\!\!\!\!\!\!\!\!\!\!\!\!\!	\!\!\!\!\!\!\!\!\!\! = K_a(n)\,\mathbb P\!\left[\left\langle \mbox{\boldmath $s$}_i,	\hat{\mbox{\boldmath $u$}}\right\rangle \ge 0 \,\middle|\, \mbox{\boldmath $s$}_i \ \text{is sent} \right]
\! + \! \big(M_n-K_a(n)\big)p_n. \label{EXPH_cap1}
\end{eqnarray}
As proved in Theorem \ref{THMRetention}, we know that
\begin{equation}
p_{\mathrm{ret},n}(0)=\mathbb P\!\left[\left\langle \mbox{\boldmath $s$}_i,\hat{\mbox{\boldmath $u$}}\right\rangle \ge 0 \,\middle|\, \mbox{\boldmath $s$}_i \ \text{is sent} \right]
\rightarrow \frac{1}{2}+\frac{1}{\pi}\arcsin c,
\end{equation}
as $n \rightarrow \infty$. 
Hence, for every $\epsilon>0$, there exists $n_0>0$ such that for all $n\ge n_0$,
\begin{eqnarray}
	&& \!\!\!\!\!\!\!\!\!\!\!\!\!\!\!\!\!\!\!\!\!\!\! \frac{1}{2}+\frac{1}{\pi}\arcsin c - \epsilon < p_{\mathrm{ret},n}(0) <\frac{1}{2}+\frac{1}{\pi}\arcsin c + \epsilon , \label{DefLimit_cap2}\\
	&& \ \ \ \ \ \ \ \frac{1}{2}-\epsilon < p_n <\frac{1}{2}+\epsilon.
\label{DefLimit_cap1}
\end{eqnarray}
By \eqref{EXPH_cap1}, \eqref{DefLimit_cap2}, \eqref{DefLimit_cap1}, and using
$M_n= n^d$ and $K_a(n)=\beta n$, for all large $n$, we obtain

\begin{eqnarray}
	 &&\!\!\!\!\!\!\!\!\!\!\!\!\!\!\!\!\!\!\!\!\! \mathbb E\!\left[\left|\hat{\mathcal H}\right|\right] \geq (\beta n) \left(\frac{1}{2}+\frac{1}{\pi}\arcsin c -\epsilon \right) + \nonumber \\
	  && \ \ \ \ \ \ \ \ \ \ \ \ \ \ \ \ \ \ (n^d-\beta n)\left(\frac{1}{2}-\epsilon\right),
\label{LowerboundonEH_cap1}
\end{eqnarray} 
Since $d>2$, the dominant term is $(1/2-\epsilon)n^d$. Hence, there exists a constant $c_0>0$ such that, for all large $n$, 
\begin{eqnarray}
	\mathbb{E}\left[|\hat{\mathcal{H}}|\right]\geq c_0 n^d.
\end{eqnarray}
Because $K_a(n)=n\beta$, we sharpen this to
\begin{eqnarray}
	\mathbb{E}\left[| \hat{\mathcal{H}}|\right]-K_a(n) \geq c^{\prime} n^d
\end{eqnarray}
for some constant $c^{\prime}>0$ and all sufficiently large $n$.

	We start upper bounding $\mathbb{P}\left[|\hat{\mathcal{H}}| < K_a(n) \right]$ as the following
	\begin{eqnarray}
		&& \!\!\!\!\!\!\!\!\!\! \mathbb{P}\left[|\hat{\mathcal{H}}| < K_a(n) \right] = \mathbb{P}\left[| \hat{\mathcal{H}}| -  \mathbb{E}\left[|\hat{\mathcal{H}}|\right]\! <\! K_a(n) -\mathbb{E}\left[|\hat{\mathcal{H}}|\right]  \right] \nonumber\\
		&& \!\!\!\!\!\!\!\!\!\! \leq \mathbb{P}\left[| \hat{\mathcal{H}}| -  \mathbb{E}\left[|\hat{\mathcal{H}}|\right] < -c^{\prime}n^d \right]\nonumber\\
		&& \!\!\!\!\!\!\!\!\!\! \leq \mathbb{P}\left[H_{\mathrm{true}}^{(n)} - \mathbb{E}\left[H_{\mathrm{true}}^{(n)}\right] < -\frac{c^{\prime}}{2}n^d\right] \label{bound_based_on_split1}\\
		&& \ \ \ \ \ \ \ + \mathbb{P}\left[H_{\mathrm{other}}^{(n)} - \mathbb{E}\left[H_{\mathrm{other}}^{(n)}\right] < -\frac{c^{\prime}}{2}n^d\right],\label{bound_based_on_split}
	\end{eqnarray}
	where (\ref{bound_based_on_split1}) and (\ref{bound_based_on_split}) follow from the split of $|\hat{\mathcal{H}}|$ in (\ref{CarH1}). We next focus on upper bounding the probability terms in (\ref{bound_based_on_split1}) and (\ref{bound_based_on_split}) according to suitable concentration inequalities. 
	\begin{itemize}
		\item \emph{Concentration for $H_{\mathrm{other}}^{(n)}$ (Hoeffding's Inequality)}: Since $H_{\mathrm{other}}^{(n)} | \mbox{\boldmath $Y$} \sim  \mathrm{Bin}\left(M_n-K_a(n), p_n\right)$ such that $p_n \rightarrow 1/2$, by Hoeffding's inequality, we have
		\begin{eqnarray}
			&& \mathbb{E}\left[\mathbb{P}\left[H_{\mathrm{other}}^{(n)} - \mathbb{E}\left[H_{\mathrm{other}}^{(n)}\right] < -\frac{c^{\prime}}{2}e^{nR} \middle \vert \mbox{\boldmath $Y$}\right]\right] \nonumber\\
			&&\leq \exp\left \{-\frac{{c^{\prime}}^2 n^{2d}}{2(M_n-K_a(n))}\right\} \stackrel{(a)}{=} e^{-\tilde{c}_{1}n^d}, \label{HoeffdingBound}
        \end{eqnarray}	
        where $(a)$ follows from assumptions $M_n = n^d$, $K_a(n) = o(n^d)$ and $M_n-K_a(n) \sim n^d$ for large $n$, resulting into a constant $\tilde{c}_{1}>0$.
        \item \emph{Concentration for $H_{\mathrm{true}}$ (McDiarmid's Inequality)}:
        Before proceeding, we recall the bounded difference property along with McDiarmid's inequality.
        \begin{defn}(Bounded Difference Property)
        	A function $f : \mathcal{X}_{1} \times \mathcal{X}_{2} \times ... \times \mathcal{X}_{n} \rightarrow \mathbb{R}$ satisfies bounded difference property, if substituting the value of the $i$-th coordinate $x_i$ changes the value of $f$ by at most $c_i$.
        \end{defn}

        \textbf{McDiarmid's Inequality}: Let $f : \mathcal{X}_{1} \times \mathcal{X}_{2} \times ... \times \mathcal{X}_{n} \rightarrow \mathbb{R}$ be a function satisfying bounded difference property with bounds $c_1,...,c_n$. Consider independent random variables $X_1,...,X_n$, where $X_i \in \mathcal{X}_{i}$ for all $i$. Then $\forall \epsilon > 0$,
        \begin{eqnarray}
        	&& \mathbb{P}\left[f(X_1,...,X_n) - \mathbb{E}\left[f(X_1,...,X_n)\right] \leq -\epsilon \right] \nonumber \\
        	&&  \ \ \ \ \ \ \ \ \ \ \ \ \ \ \ \ \ \ \ \ \ \ \leq \exp\left \{ -\frac{2\epsilon ^2}{\sum_{i=1}^{n}c_i^2}\right\}.\nonumber
        \end{eqnarray}
        As defined in (\ref{CarH1}), we have
        \begin{eqnarray}
        	H_{\mathrm{true}}^{(n)} = \sum_{i=1}^{K_a(n)} \mathbf{1}\{ \langle \mbox{\boldmath $s$}_{i}, \hat{\mbox{\boldmath $u$}}\rangle \geq 0 \}.
        \end{eqnarray}         
        We consider $H_{\mathrm{true}}^{(n)}$ as a function of $K_a(n)$ codewords and Gaussian noise $\mbox{\boldmath $Z$}$. Each codewod and $\mbox{\boldmath $Z$}$ are independent inputs to function $H_{\mathrm{true}}^{(n)}$, and changing any single input can change $H_{\mathrm{true}}^{(n)}$ by at most $1$ ($H_{\mathrm{true}}^{(n)}$ is a summation of indicator functions that only take values $0$ and $1$.). Thus, $H_{\mathrm{true}}^{(n)}$ satisfies the bounded difference property with bounds $c_1,...,c_{K_a(n)+1} \leq 1$. Using McDiarmid's inequality, we get
        \begin{eqnarray}
        	&& \mathbb{P}\left[ H_{\mathrm{true}}^{(n)} - \mathbb{E}\left[H_{\mathrm{true}}^{(n)}\right] \leq -\frac{c^{\prime}n^d}{2}\right] \nonumber\\
        	&& \leq \exp\left\{ -\frac{{c^{\prime}}^2 n^{2d}}{2(K_a(n)+1)}\right\}\stackrel{(b)}{=}e^{-\tilde{c}_{2}n^{2d-1}},\label{McDiarmid}
        \end{eqnarray}
        where $(b)$ follows from ${K_a(n)}/{n} \rightarrow \beta$ for some constant $\tilde{c}_{2}>0$.
	\end{itemize}
	Now, according to (\ref{bound_based_on_split1}) and (\ref{bound_based_on_split}), combining the results from (\ref{HoeffdingBound}) and (\ref{McDiarmid}) yields
	\begin{eqnarray}
		&& \mathbb{P}\left[| \hat{\mathcal{H}}| <  K_a(n) \right] \leq \mathbb{P}\left[| \hat{\mathcal{H}}| - \mathbb{E}\left[| \hat{\mathcal{H}} | \right]  \leq -cn^d \right] \nonumber \\
		&& \ \ \ \ \ \ \ \ \ \ \ \ \ \ \ \ \ \ \ \ \ \leq  e^{-\tilde{c}_{2}n^{2d-1}} + e^{-\tilde{c}_{1}n^d}, \nonumber
	\end{eqnarray}
	which results into
	\begin{eqnarray}
		\mathbb{P}\left[| \hat{\mathcal{H}}| <  K_a(n) \right] \rightarrow 0,
	\end{eqnarray}
	as $n\rightarrow \infty$. Finally, from (\ref{H_nH}), it follows that
	\begin{eqnarray}
		\mathbb{P}\left[| \hat{\mathcal{H}}_{n}| < K_a(n)\right] \rightarrow 0.
	\end{eqnarray}
\end{proof}

\end{document}